\renewcommand{\vec}[1]{\boldsymbol{#1}}
\newtheorem{theorem}{Theorem}[section]
\newtheorem{corollary}[theorem]{Corollary}
\newtheorem{proposition}[theorem]{Proposition}
\newcommand{\CM}{{\mathbb C}}
\newcommand{\NM}{{\mathbb N}}
\newcommand{\QM}{{\mathbb Q}}
\newcommand{\ZM}{{\mathbb Z}}
\newcommand{\DM}{{\mathbb D}}
\newcommand{\Ff}{{\mathcal F}}
\newcommand{\Ll}{{\mathcal L}}
\begin{document}

\title{Converging Periodic Boundary Conditions and Detection of Topological Gaps on Regular Hyperbolic Tessellations }

\author{Fabian R. Lux}
\email{fabian.lux@yu.edu}
\affiliation{Department of Physics and
 Department of Mathematical Sciences 
\\Yeshiva University, New York, NY 10016, USA
}

\author{Emil Prodan}
\email{prodan@yu.edu}
\address{Department of Physics and
 Department of Mathematical Sciences 
\\Yeshiva University, New York, NY 10016, USA
}

\date{\today}

\begin{abstract} 
Tessellations of the hyperbolic spaces by regular polygons are becoming popular because they support discrete quantum and classical models displaying unique spectral and topological characteristics. Resolving the true bulk spectra and the thermodynamic response functions of these models requires converging periodic boundary conditions and our work delivers a practical solution for this open problem on generic $\{p,q\}$-tessellations. This enables us to identify the true spectral gaps of bulk Hamiltonians and, as an application, we construct all but one topological models that deliver the topological gaps predicted by the $K$-theory of the lattices. We demonstrate the emergence of the expected topological spectral flows whenever two such bulk models are deformed into each other and, additionally, we prove the emergence of topological channels whenever a soft physical interface is created between different topological classes of Hamiltonians.
\end{abstract}

\maketitle

The investigation of synthetic materials is an active area of research. In particular, crystals generated from tessellations of hyperbolic spaces have been proposed and some even realized with quantum, photonic, electromagnetic and mechanical degrees of freedom \cite{KollarNature2019,LenggenhagerNatComm2022,ZhangNatComm2022,ChenNatComm2023,RuzzeneEM2021,ZhangNatComm2023}. This is part of a new trend in materials science where the focus is shifted from making a material stronger, lighter, more durable, etc., to making it different or to behave differently. The new paradigm is geared towards creating new opportunities in materials science, which can come in the form of unique spectral characteristics or stabilization of fundamentally distinct topological phases and the hyperbolic crystals have been a source of both \cite{KollarCMP2020,ComtetAP1987,ComtetAP1987,ZhangNatComm2022,
ChenNatComm2023,ZhangNatComm2023,UrwylerPRL2022,LiuPRB2023,MathaiATMP2019}. In fact, the band topology of the hyperbolic crystals has been exhaustively characterized quite a while ago \cite{CareyCMP1998,MarcolliCCM1999,MarcolliCMP2001,LuekKTh2000}.

In these new endeavors, scientists are facing challenges that requires entirely new tools of analysis, both theoretical and computational, and lack of such tools can hold a field hostage for years. For example, the lack of a systematic way to impose periodic boundary conditions (PBC) on hyperbolic lattices prevents us from resolving the true bulk spectra of the Hamiltonians, computing thermodynamic coefficients, correlations functions, bulk topological invariants and identifying topological gaps. We recall that the hyperbolic space-groups do not have finite-index abelian subgroups, as their Euclidean counterparts do, and they are non-amenable, hence the ratio between the numbers of boundary and bulk sites converges to a strictly positive value. Thus, suppressing the boundary states in finite-size samples is necessary but, unfortunately, not sufficient because convergence with the sample size cannot be taken for granted. For these reasons, resolving the true bulk characteristics of hyperbolic crystals remained an open problem. 

If one is interested only in the bulk spectra, then a universal solution is to evaluate the local density of states at or near the center of the finite-size crystal with open boundary conditions \cite{ChenNatComm2023}, but this method converges only as an inverse power with the crystal size \cite{MassattMMS2017} and its reliability when it comes to computing thermodynamic coefficients or topological invariants is yet to be demonstrated. On the other hand, when PBCs can be systematically defined, they supply extremely fast convergences (typically exponential) with the size of the crystals, for both spectra and thermodynamic coefficients \cite{ProdanSpringer2017,LuxArxiv2023}. Partial progress on resolving the bulk characteristics of hyperbolic lattices has been achieved via generalizations of Bloch-Floquet calculus \cite{MaciejkoSciAdv2021,ChengPRL2022}, which is intimately related to the problem of PBCs \cite{MaciejkoPNAS2022}. So far, these techniques can resolve the bulk spectra covered by one and two dimensional representations of the hyperbolic space-groups and sometimes this seems to be just enough \cite{ChenNatComm2023,ZhangNatComm2023}.  

\begin{figure*}[t]
\center
\includegraphics[width=0.99\linewidth]{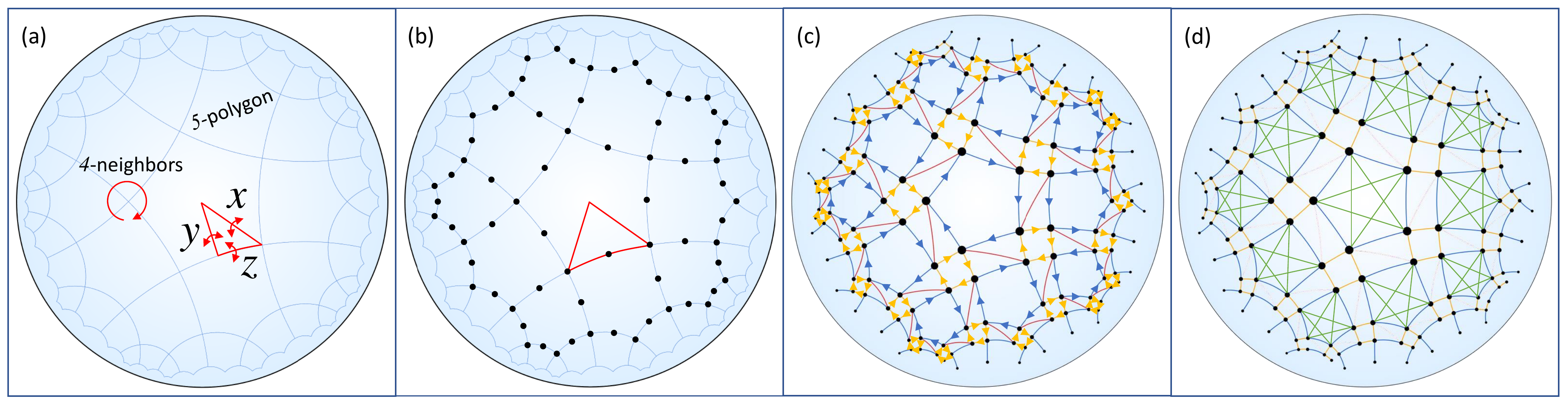}
  \caption{\small (a) The $\{5,4\}$ tilling of the hyperbolic disk and the reflections generating the full space-group of symmetries $\Delta_{\{p,q\}}$. (b) Vertices of the tilling and the fundamental domain of the proper space-group symmetries $\Delta^+_{\{p,q\}}$. (c) The Cayley diagraph of $\Delta^+_{\{p,q\}}$, showing the flow of the points under the right action of the generators $x_1$ (blue), $x_2$ (orange) and $x_3$ (red). (d) Couplings needed to implement $\pi_R(h_1(\lambda))$ from Eq.~\eqref{Eq:ModelH}, where ${\rm blue}=\lambda |g\cdot x_1^{-1}\rangle \langle g| + \lambda^{-1} |g \rangle \langle g \cdot x_1^{-1}|$, ${\rm green}=\lambda^2 |g\cdot x_1^{-2}\rangle \langle g| + \lambda^{-2} |g \rangle \langle g \cdot x_1^{-2}|$ and ${\rm orange}=|g\cdot x_2^{-1}\rangle \langle g| + |g \rangle \langle g \cdot x_2^{-1}|$. The latter is only needed for the integrity of the structure.
}
 \label{Fig:Tilling}
\end{figure*}

In our recent work \cite{LuxArxiv2023}, however, we introduced a general and algorithmic method to impose PBCs on increasingly larger finite hyperbolic crystals, together with rigorous and numerical proofs of fast convergence to the thermodynamic limit. The folding of the infinite lattice into a finite regular graph without boundary can be achieved by taking the quotient of the hyperbolic space-group with one of its finite-index normal subgroups \cite{LuekGFA1994,LuekBook2002,MaciejkoPNAS2022,SaussetJPA2007}. To converge to the thermodynamic limit, one needs a whole coherent sequence of such such normal subgroups, whose total intersection reduces to just the neutral element \cite{LuekGFA1994,LuekBook2002}. Note that, while a generic group can have a plethora of normal subgroups, only a coherent sequence can guarantee that the quotient groups are meaningful and actually resemble the infinite group that we want to approximate. This is a trivial task for regular lattices in Euclidean space, precisely because all space-groups contain the subgroup of pure translations, which is a finite-index abelian normal subgroup. This is not at all the case in the hyperbolic spaces yet the algorithm devised in \cite{LuxArxiv2023} does just that and, to work, it requires at the input a faithful representation of the hyperbolic space group as a subgroup of ${\rm GL}(n,R)$, where $R$ is a ring extension of $\ZM$. This input was provided in \cite{LuxArxiv2023} for the simplest hyperbolic space-group. In the present work, we supply this input together with computer algorithms that completely solve the problem of converging PBCs on generic regular $\{p,q\}$-tessellations of the hyperbolic disk. Their convergence was already demonstrated by testing against known exact results \cite{LuxArxiv2023} and will be tested again here.

As an application, we resolve all topological bands supported by a $\{p,q\}$-tessellation, except one \cite{Footnote1}, in the sense that we build gapped model Hamiltonians that display these bands at the bottom of the spectrum. Given that our numerically computed spectra are clean of boundary spectrum, we can demonstrate the main topological characteristic of these models, namely, the emergence of a topological spectral flow whenever a distinct pair of topological Hamiltonians are continuously deformed into each other. The topological bands are listed by the $K_0$-group of the space-group's $C^\ast$-algebra and we simply took this information from \cite{MarcolliCCM1999,LuekKTh2000}.

Let $\DM$ be the open disk model of the hyperbolic 2-dymensional space. Topologically, it is identical with the Euclidean disk, but $\DM$ carries the metric $d s^2 =\frac{dz d\bar z}{(1-|z|^2)^2}$. The homeorphisms of the disk preserving this metric form the continuous group ${\rm Iso}(\DM)$ of hyperbolic isometries. Its discrete subgroups of orientation preserving disk transformations with compact domain are called Fuchsian groups of first kind. Up to isometries, they are classified by their signatures $\langle g;\nu_1,\ldots,\nu_r \rangle$ and a Fuchsian group with such signature has $2g+r$ generators and can be presented using relations as \vspace{-0.2cm}
\begin{equation}\label{Eq:GenF}
\begin{aligned}
\Ff_{g,\bm \nu} = &  \big \langle a_1,b_1,\ldots,a_g,b_g,x_1,\ldots,x_m \ | \\ 
& \quad   x_1^{\nu_1}, \ldots, x_m^{\nu_m},  x_1 \cdots x_m [a_1,b_1] \cdots [a_g, b_g] \big \rangle, 
\end{aligned}\vspace{-0.2cm}
\end{equation}
where $[a,b] : =ab a^{-1} b^{-1}$ denotes the commutator of two elements. The generators $a_i$ and $b_i$ are called hyperbolic transformations and they do not fix any point of $\DM$, while the elements $x_i$ are called elliptic transformations and they do fix points of $\DM$ \cite[Ch.~2]{Katok1992}. Thus, there are similarities with the space-groups of the Euclidean spaces. In fact, any Fuchsian group has a (non-unique) normal subgroup $\Ff_{g'}$ generated by $2g'$ hyperbolic transformations, entering the exact sequence of groups \cite{MathaiATMP2019} \vspace{-0.2cm}
\begin{equation}\label{Eq:SES}
1 \rightarrow \Ff_{g'} \rightarrow \Ff_{g,{\bm \nu}} \rightarrow P=\Ff_{g,{\bm \nu}}/\Ff_{g'} \rightarrow 1, \vspace{-0.2cm}
\end{equation}
where $P$ is the (finite) point group of the hyperbolic crystal. The relation between Eq.~\eqref{Eq:SES} and the recent work \cite{BoettcherPRB2022} on hyperbolic crystallography is discussed in \cite{Supplemental}. Applications of the disk transformations contained in the Fuchsian groups of first kind to a point of the disk generates all discrete hyperbolic lattices.

The tessellations of $\DM$ by regular $\{p,q\}$-polygons are always possible if $1/p + 1/q < 1/2$. We will use $\{5,4\}$ in our exemplifications because it puts more points towards the center of disk, hence it is easier to visualize and fabricate. It is shown in Fig.~\ref{Fig:Tilling}(a). The full group of hyperbolic isometries preserving this tilling is the triangle group $\Delta_{\{5,4\}}$ generated by the three reflections $x$, $y$, $z$ against the sides of the triangle shown in Fig.~\ref{Fig:Tilling}(a) \cite{Katok1992}. It has a maximal subgroup of proper disk transformations, which is the Fuchsian group $\Delta^+_{\{p,q\}} = \langle 0;p,q,2\rangle$ with $x_1=xy$, $x_2=yz$ and $x_3=zx$, having the fundamental domain indicated in Fig.~\ref{Fig:Tilling}(b), which is strictly smaller than the regular polygon of the original tilling. Iff $q$ has a prime divisor less than or equal to $p$, then there exists a Fuchsian group having the regular polygon itself as its fundamental domain \cite{YunckenMMJ2003}. It is more fruitful, however, to proceed with $\Delta^+_{\{p,q\}}$ as the space-group of the regular tessellation.  

Tillings do not automatically come with vertices. However, Katok \cite[p.~70]{Katok1992} introduces a generic definition of vertices as the points of the tiles' boundaries that are fixed by one element of its full group of proper symmetries. The vertices generated by this criterion are shown in Fig.~\ref{Fig:Tilling}(b). In fact, this criterion generates a representation of the Cayley graph of the space-group $\Delta^+_{\{p,q\}}$. Other representations can be generated by acting with the elements of $\Delta^+_{\{p,q\}}$ on a generic point of the disk, as shown in Fig.~\ref{Fig:Tilling}(c). In fact, this figure shows the standard Cayley diagraph of $\Delta^+_{\{p,q\}}$ space-group, which encodes the entire group-algebraic information in a geometric fashion \cite{CoxeterBook}. 

\begin{figure}[t]
\center
\includegraphics[width=\linewidth]{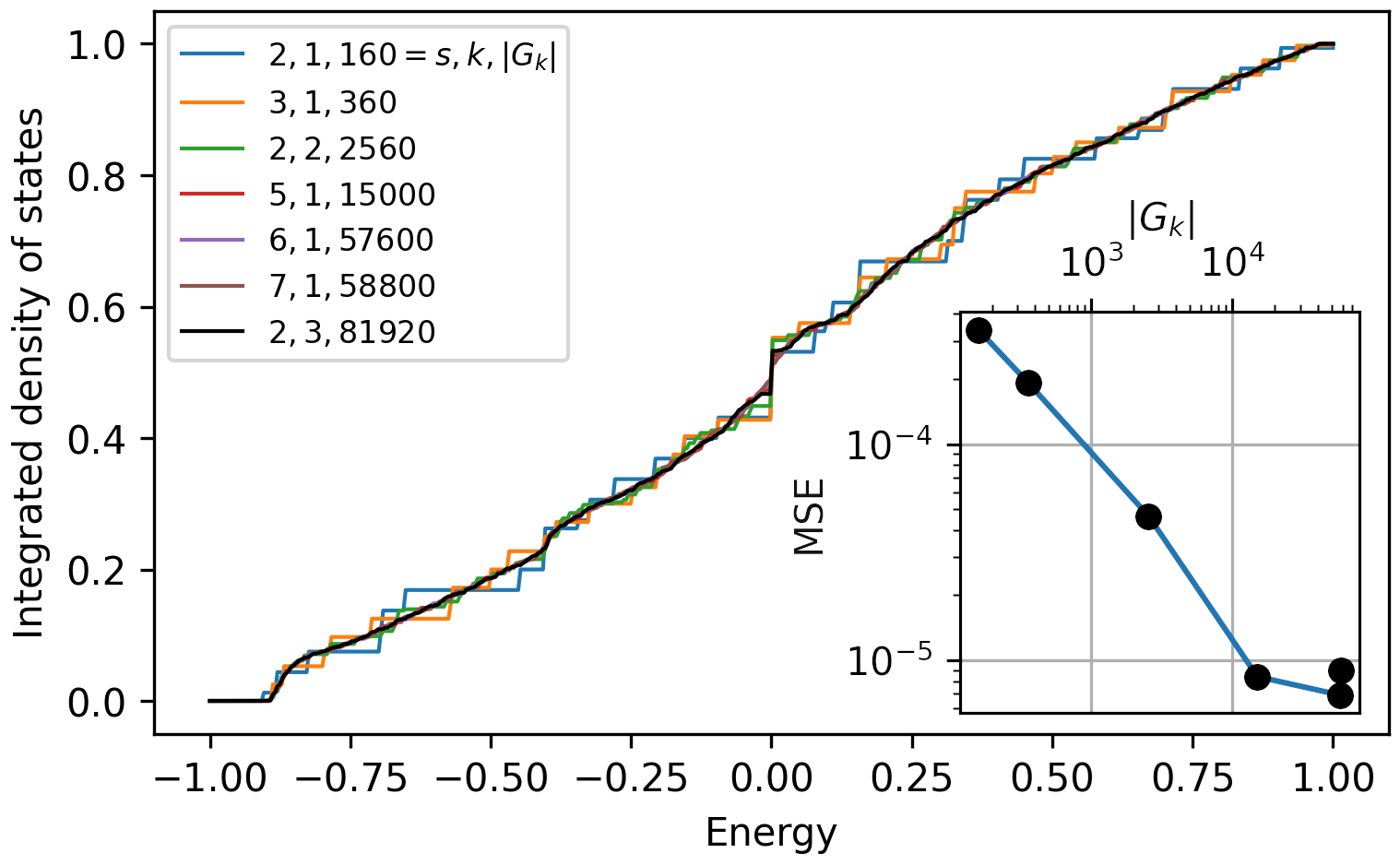}
  \caption{\small Integrated density of states of the adjacency matrix $\Delta$ as function of energy, for various finite-size approximations. The inset shows the mean squared error (MSE) with the largest system as reference.
}\label{Fig:Conv}
\end{figure}

Our tight-binding Hamiltonians will be defined on the point-lattice $\Ll$ from Fig.~\ref{Fig:Tilling}(c), whose points can be uniquely labeled by the elements of $\Delta^+_{\{p,q\}}$. This is natural because the standard left-regular representation of $\Delta^+_{\{p,q\}}$ acts as $\pi_L(g)|g'\rangle = |gg'\rangle$ on the Hilbert space $\ell^2\big (\Delta^+_{\{p,q\}}\big )$ spanned by the vectors $|g\rangle$, $g \in \Delta^+_{\{p,q\}}$, and the Hamiltonians are generated from the right-regular representation of the group algebra $\CM \Delta^+_{\{p,q\}}$:\vspace{-0.2cm}
\begin{equation}\label{Eq:H1}
    H= \pi_R  (h ), \quad h = \sum\nolimits_{g \in \Delta^+_{\{p,q\}}} w_g \cdot g,\vspace{-0.2cm}
\end{equation}
with $w_g=w_{g^{-1}}^\ast \in \CM$. Explicitly, $H$ acts as \vspace{-0.2cm} 
\begin{equation}\label{Eq:RRRep}
H|g'\rangle = \sum\nolimits_{g \in \Delta^+_{\{p,q\}}} w_g |g' g^{-1}\rangle, \vspace{-0.2cm}
\end{equation}
and it is straightforward to check that $\pi_L(g)\pi_R(h) = \pi_R(h) \pi_L(g)$ for any $h$ as in Eq~\eqref{Eq:H1} and $g \in \Delta^+_{\{p,q\}}$. These Hamiltonians can be easily implemented with metamaterials using the coupling-mode theory as guidance \cite{ApigoPRM2018}.

We now explain L\"uck's work \cite{LuekGFA1994,LuekBook2002} on formulating PBCs in algebraic fashion. Any finite-index normal subgroup $N_k$ of $\Delta^+_{\{p,q\}}$ sets a canonical projection onto the finite quotient sub-group $\rho_k : \Delta^+_{\{p,q\}} \to G_k = \Delta^+_{\{p,q\}}/N_k$, which can be lifted to the level of group algebras: \vspace{-0.2cm}
\begin{equation}\label{Eq:Hk}
    h \mapsto \rho_k(h) = \sum\nolimits_{g \in \Delta^+_{\{p,q\}}} w_g \cdot \rho_k(g) \in \CM G_k.\vspace{-0.2cm}
\end{equation}
The right-regular representation $\pi_R$ of $\CM G_k$ acts on the finite Hilbert space $\ell^2(G_k)$ of dimension $|G_k|$ and the spectrum of $\pi_R(h_k)$ is contained in the spectrum of $H$. Therefore, the spectral gaps of $H$ are not contaminated by the so-constructed finite-size approximation \cite{LuxArxiv2023} and this is what the PBCs are sought to deliver! To achieve the thermodynamic limit, one needs a whole coherent sequence of normal subgroups $\Delta^+_{\{p,q\}} =N_0 \triangleright N_1 \triangleright N_2 \cdots$, such that $\cap N_k =\{1\}$ \cite{LuekGFA1994,LuekBook2002}. Then the Green's functions can be recovered with arbitrary precision from the so-constructed finite-size approximations \vspace{-0.2cm}
$$
 \langle g | (H-z)^{-1} |g'\rangle = \lim_{k \to \infty} \langle \rho_k(g) |(\rho_k(H) - z)^{-1}|\rho_k(g') \rangle. \vspace{-0.2cm}
$$
This in turn assures that the spectrum, thermodynamic coefficients, correlation functions, topological invariants, etc., can computed with arbitrary precision from the so-constructed finite-size approximations.

Our observation in \cite{LuxArxiv2023} was that the maps $\rho_k$ can be as straightforward as applying the modulo arithmetic operator on certain coefficients, if $N_k$'s are generated in a specific way. To achieve this for $\Delta^+_{\{p,q\}}$ group, we searched first for a faithful but not necessarily unitary representation of $\Delta_{\{p,q\}}$ with entries landing into the simplest possible ring extension of $\ZM$. Using elements of Coxeter group theory \cite{HumphreysBook1990, Mennicke1967, Siran2001a, Siran2001b}, we found the representation \cite{Supplemental} \vspace{-0.2cm}
 \begin{align*}{\scriptsize
    \sigma_x = \begin{pmatrix}
        -1 & 2 \eta & 0 \\
        0 & 1 & 0 \\
        0 & 0 & 1
    \end{pmatrix},
    \
    \sigma_y = \begin{pmatrix}
        1 & 0 & 0 \\
        2 \eta & -1 & 2\zeta \\
        0 & 0 & 1
    \end{pmatrix},
    \
    \sigma_z = \begin{pmatrix}
        1 & 0 & 0 \\
        0 & 1 & 0 \\
        0 & 2\zeta & -1
    \end{pmatrix},} \vspace{-0.4cm}
\end{align*} 
where $\eta = \cos\big(\frac{\pi}{p}\big ), \zeta = \cos \big ( \frac{\pi}{q} \big )$. Multiplications of these matrices produce entries that all belong to the polynomial ring $\ZM[\xi]$, $\xi = 2\cos\big (\frac{\pi}{pq} \big )$. As such, our representation is a subgroup of ${\rm GL}(3,\ZM[\xi])$. Then the coherent sequence of subgroups can be simply taken as $N_k = \Delta^+_{\{p,q\}} \cap {\rm SL}(3,s^k\, \ZM[\xi])$, $s \in \NM$.

Our last task is to characterize $G_k$'s. As shown in \cite{Supplemental}, $\xi$ is an algebraic number, more precisely it is a root of a minimal irreducible polynomial of degree $\varphi(2p q)/2$, where $\varphi$ is Euler's totient function \cite{Lehmer1933, Watkins1993}. As a consequence, every entry in the matrix representations of $\Delta^+_{\{p,q\}}$ can be written as $\sum_{r=0}^{\varphi/2-1} c_r \xi^r$, with all $c_r$ coefficients from $\ZM$. Multiplication of such series together with the algebraic reduction leads to a specific multiplication of the coefficients $\bm c = \{c_r\}$, which we denote by $\bm c \star \bm c'$ \cite{Supplemental}. Then $G_k$ is the image in the group ${\rm SL}(3,\ZM_{s^k}[\xi])$ of the elements of $\Delta^+_{\{p,q\}}$ under taking the ${\rm mod}\, s^k$ operation on the coefficients $c_r$. Thus, every entry of these matrices can be written as $\sum_{r=0}^{\varphi/2-1} \tilde c_r \xi^r$, with all $\tilde c_r$ coefficients from $\ZM_{s^k}$. As for the multiplication in $G_k$, it is the usual matrix multiplication but with the multiplication between the entries replaced by \vspace{-0.2cm}
$$
\Big (\sum_{r=0}^{\varphi/2-1} \tilde c_r \xi^r\Big )\cdot \Big (\sum_{r=0}^{\varphi/2-1} \tilde c'_r \xi^r\Big ) = \sum_{r=0}^{\varphi/2-1} \big (\tilde{\bm c} \star \tilde{\bm c}')_r \, {\rm mod} \; s^k \  \xi^r.\vspace{-0.2cm}
$$

The order of $G_k$ determines the dimension of the finite Hilbert space $\ell^2(G_k)$, where the approximated Hamiltonians $\rho_k(H)$ from Eq.~\eqref{Eq:Hk} act on. This action goes as in Eq.~\eqref{Eq:RRRep} with $\Delta^+_{\{p,q\}}$ replaced by $G_k$ and, to implement it numerically, all that is needed is an indexing of the elements of $G_k$ and the computation of the multiplication table. A full working code implementing all of that, together with documentation, can be downloaded from \cite{CodeRepository}. It can be seen in action in Fig.~\ref{Fig:Conv}, where the bulk spectrum of the adjacency operator $\Delta = \frac{1}{4}(x_1+x_1^{-1} +x_2 + x_2^{-1})$ is resolved and the exponentially fast convergence to the thermodynamic limit is demonstrated.

\begin{figure}[t]
\center
\includegraphics[width=\linewidth]{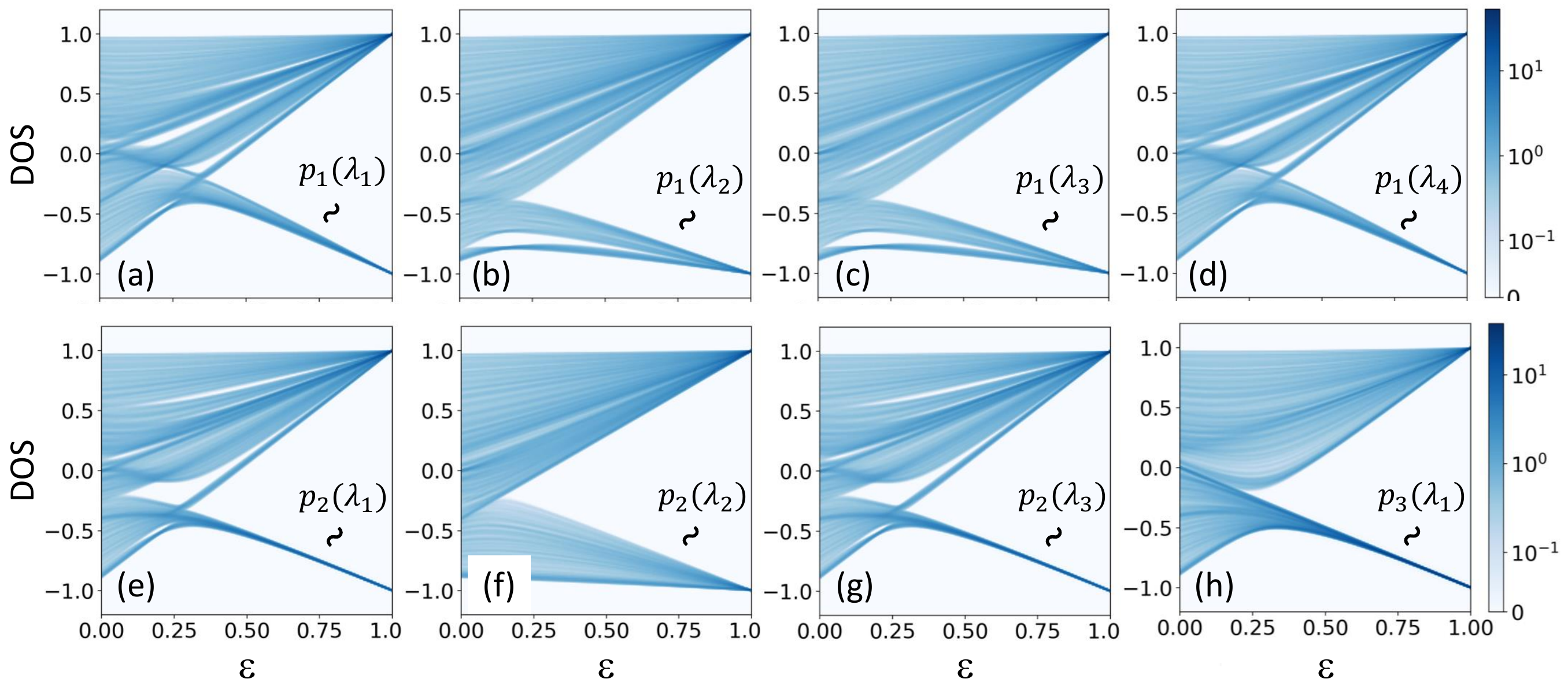}\vspace{-0.2cm}
  \caption{\small Opening available topological gaps using the model Hamiltonians~\eqref{Eq:ModelH}. The topological classes of the lower bands are specified in each panel, where $\lambda_j = e^{\frac{\imath 2 \pi j}{\nu_\alpha}}$. The calculations are performed with PBC generated by $s=2$ and $k=3$. \vspace{-0.6cm}
}
 \label{Fig:TopoGaps}
\end{figure}

We now start our application on resolving the topological bands supported by the $\{p,q\}$ tessellations. The symmetric Hamiltonians live inside the $C^\ast$-algebra of the space-group. For a generic $\Ff_{g,\bm \mu}$ as in Eq.~\eqref{Eq:GenF}, the $K_0$-group of this algebra is isomorphic to $\ZM^{2+\sum_{\alpha=1}^m(\nu_\alpha-1)}$. It is freely generated by the identity operator, a projection $p_0$ that carries the hyperbolic Chern number \cite{CareyCMP1998,MarcolliCCM1999,MarcolliCMP2001} and by the spectral projections of the cyclic elements $x_\alpha$, \vspace{-0.2cm}
\begin{equation}
p_\alpha(\lambda_k) = \frac{1}{\nu_\alpha} \sum_{j=0}^{\nu_\alpha -1}\lambda_k^j x_\alpha^j, \ \lambda_k=e^{\frac{2 \pi \imath k}{\nu_\alpha}}, k=\overline{1,\nu_\alpha}. \vspace{-0.2cm}
\end{equation}
This assures us that any band projection of a symmetric Hamiltonian can be continuously deformed into a stacking of these fundamental projections $p_0^{n_0} \oplus p_1(\lambda_1)^{n_1(\lambda_1)}\oplus \cdots \oplus p_m(\lambda_{\nu_m})^{n_m(\lambda_{\nu_m})}$ without closing the flanking spectral gaps \cite{footnote7}. The integer numbers $\{n_0,n_1(\lambda_1)\ldots,n_m(\lambda_{\nu_m})\}$ represent a complete set of independent topological invariants that can be associated to a band projection.

In the case of $\Delta^+_{\{p,q\}} = \langle 0;p,q,2\rangle$, we have a total of eight $p_\alpha(\lambda)$ projections and, for example, \vspace{-0.2cm}
\begin{equation}\label{Eq:ModelH}
h_\alpha(\lambda,\epsilon)=\epsilon \big (1-2p_\alpha(\lambda)\big) +(1-\epsilon) \Delta \vspace{-0.2cm}
\end{equation}
are model Hamiltonians displaying topological bulk gaps and bottom spectral bands carrying the $K$-theoretic labels $n_\alpha(\lambda)=1$ \cite{footnote9}. Openings of these topological spectral gap are shown in Fig.~\ref{Fig:TopoGaps} and the types of physical couplings needed between the resonators to implement $h_1(\lambda)$ are shown in Fig.~\ref{Fig:Tilling}(d). They can certainly be implemented with the platform developed in \cite{ChenNatComm2023}.

\begin{figure}[t]
\center
\includegraphics[width=0.99\linewidth]{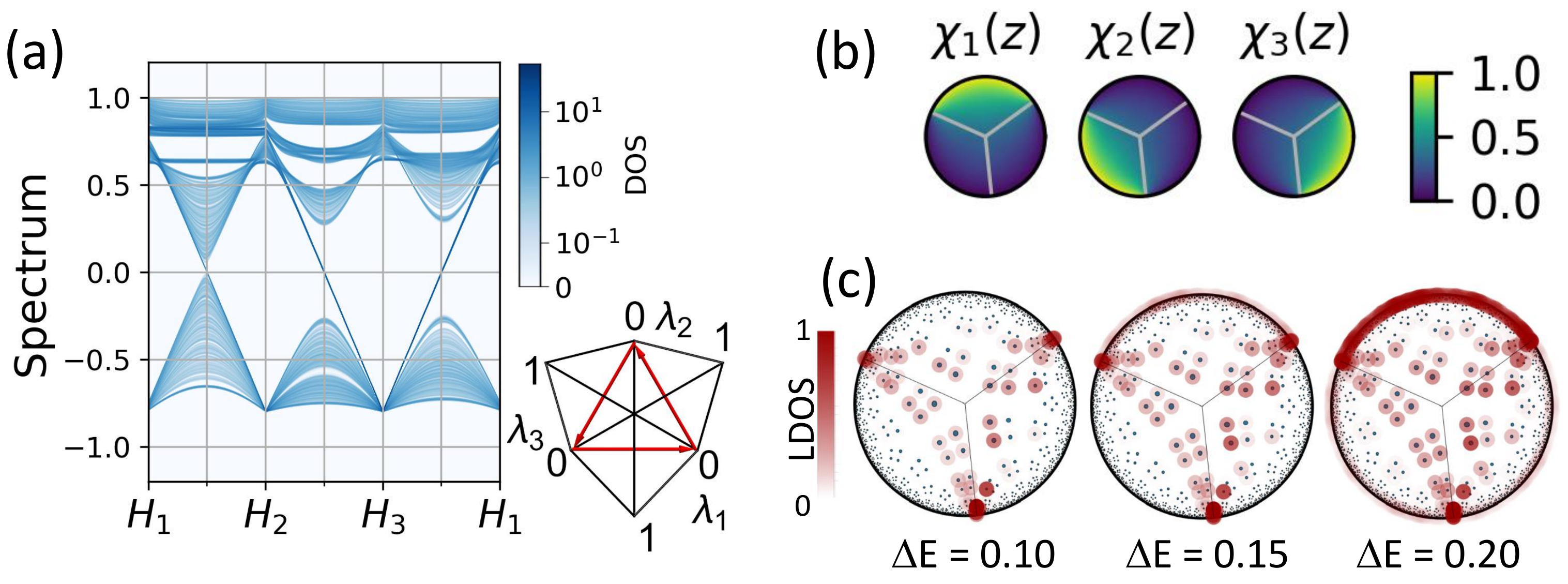}
  \caption{\small (a) Energy spectrum of the Hamiltonian $\lambda_1 H_1+\lambda_2 H_2 + \lambda_3 H_3$ along the path shown in the inset, as computed with the PBC $s=2$ and $k=3$ and with $H_i$ as in Fig.~\ref{Fig:TopoGaps} panels (a), (e), (h), respectively, and $\epsilon=0.8$. (b) Our smooth partition system for the hyperbolic disk. (c) Local density of states (LDOS)~\eqref{Eq:LDOS} of Hamiltonian~\eqref{Eq:YH}, as computed with open BC on a crystal with  14255 sites. 
}
 \label{Fig:Y_junction}
\end{figure}

To demonstrate the distinct topological characters of the models~\eqref{Eq:ModelH}, we illustrate in Fig.~\ref{Fig:Y_junction}(a) samples of energy spectra resulted from pairwise interpolations. The seen topological spectral flows, which actually occur for any pair, demonstrate that two such models cannot be adiabatically connected. Since these topological spectral flows are stable against turning on or off degrees of freedom, they can be used in applications that require robust spectral engineering. Let us point out that witnessing these topological spectral flows would have been impossible without use of PBCs.

Another possible application is engineering soft topological interface channels, which can be achieved by rendering the smooth interpolations from Fig.~\ref{Fig:Y_junction}(a) in the physical space. Due to the large number of available topological phases, we can engineer complex interfaces such as the Y-junction shown in Figs.~\ref{Fig:Y_junction}(b-d). There, we use the smooth partition of the hyperbolic unit disk $\sum_{i=1}^3 \chi_i(z) =1$ shown in Fig.~\ref{Fig:Y_junction}(b), and generate a lattice Hamiltonian $H$ with matrix elements 
\begin{equation}\label{Eq:YH}
    \langle z | H |z'\rangle = \sum\nolimits_{i=1}^3 \chi_i(\mu_{z,z'}) \langle z | H_i |z'\rangle, \quad z,z' \in \Ll, 
\end{equation}
where $\mu_{z,z'}$ is the mid geodesic point of $z$ and $z'$~\cite{Wang2019}. It smoothly interpolates in space between the Hamiltonians $H_1, \, H_2, \ H_3$ showcased in Figs.~\ref{Fig:TopoGaps}(a,e,h), respectively. Figs.~\ref{Fig:Y_junction}(c) shows renderings of the local density of states 
\begin{equation}\label{Eq:LDOS}
    {\rm LDOS}(E,z) = \sum\nolimits_{n} e^{-|E_n-E|^2/(2\Delta E^2)} \; |\psi_n(z)|^2, 
\end{equation}
for several values of $\Delta E$, where the sum is over the eigenstates $H |\psi_n\rangle = E_n |\psi_n\rangle$. The energy $E$ is pinned at $E=0$ in the middle of the common bulk gap of the Hamiltonians $H_i$'s, and where Fig.~\ref{Fig:Y_junction}(a) shows a crossing of topological modes. The plots in Fig.~\ref{Fig:Y_junction}(c) confirm the expected soft topological interface modes between the distinct topological phase.

In conclusion, we derived an algorithmic procedure to impose PBCs on finite hyperbolic crystals of increasing sizes and demonstrated the exponentially fast convergence of the bulk properties to the thermodynamic limit when computed with our algorithms. Our work enables now the identification of the gapped topological phases supported by generic $\{p,q\}$-tessellations of the hyperbolic spaces and simulations of various topological dynamical effects. Work is in progress on how to extend these results in the presence of a magnetic field.

{\it Acknowledgements.} This work was supported by the U.S. National Science Foundation through the grants DMR-1823800 and CMMI-2131760.

\newpage

\section{Supplemental Material}

\subsection{On crystallography of hyperbolic disk} 

Here, we established the connection between the standard exact sequence
\begin{equation}\label{Eq:SES2}
1 \rightarrow \Ff_{g'} \rightarrow \Ff_{g,{\bm \nu}} \rightarrow P=\Ff_{g,{\bm \nu}}/\Ff_{g'} \rightarrow 1, \vspace{-0.2cm}
\end{equation}
and the recent work \cite{BoettcherPRB2022}. First, there is a known relation between $g'$ and the index of the group $P$ appearing in Eq.~\eqref{Eq:SES2}, namely \cite{MathaiATMP2019} 
\begin{equation}\label{Eq:GP}
g'=1+\tfrac{1}{2}|P|\big (2(g-1)+(m-\nu)\big), \quad \nu=\sum_{j=1}^m 1/\nu_j.
\end{equation} 
Now, assume that $\Ff_{g,{\bm \nu}}$ is associated to a $\{p,q\}$-tessellation, as explained in the main text. Then, if $F$ is the number of regular polygons tessellating the fundamental domain of $\Ff_{g'}$ in Eq.~\eqref{Eq:SES2} and $A(p,q)=(p - 2)\pi - 2 \pi p/q$ is the hyperbolic area of the regular polygon, then Gauss-Bonnet theorem gives $FA(p, q) = 4\pi (g'-1)$, which can be put into the form
$$
g' = 1 + \tfrac{1}{2}pF \Big(\frac{1}{2} - \frac{1}{p} -\frac{1}{q}\Big).
$$
By comparing with Eq.~\eqref{Eq:GP}, we can calculate the order of the point group as $|P| = p F$. This is exactly the order of the point group considered in \cite{BoettcherPRB2022}[Sec.~IIID] (with reflection excluded). Therefore, we can conclude that lowest allowed value of $g'$ from Eq.~\eqref{Eq:SES2} coincides with $g$ from Table.~IV from \cite{BoettcherPRB2022}. In our opinion, though, the torsion-free Fuschian group $\Ff_{g'}$ is the rightful generator of the Bravais lattice of the crystal and not $\Delta_{\{p_B,q_B\}}^+$. Of course, we are just talking about a lexicon and it remains to be seen which of these views is more fruitful.

\subsection{Representation of triangle groups}

In the following derivations we will follow closely the book of Humphreys \cite{HumphreysBook1990}.
We start by defining the triangle group
\begin{align}
     \Delta_{\lbrace p,q\rbrace } = \braket{x,y,z | x^2,y^2,z^2, (xy)^p, (yz)^q, (zx)^2 }.
\end{align}
We are after a faithful representation of this group with values in a ring-extension of the integers, a result that is originally due to Mennicke \cite{Mennicke1967}.
In this notation, $\Delta_{\lbrace p,q\rbrace }$ is the quotient of the free group generated by $x$, $y$ and $z$ by the smallest normal subgroup containing the elements $x^2$, $y^2$, $z^2$, $(xy)^p$, $(yz)^q$, and $(zx)^2$. The generators $x$, $y$ and $z$ can be interpreted as reflections across the sides of a triangle with the rational internal angles
\begin{equation}
    \alpha = \pi/p, ~ \beta = \pi/q~ , \gamma = \pi/2.
\end{equation}
The mental model in our case are the right-triangles in the hyperbolic disk which serve as the fundamental domain of hyperbolic tesselations.
We additionally assume that $0< p,q < \infty $ and that the angle inqueality for hyperbolic triangles holds, i.e.,
\begin{equation}
    \alpha + \beta + \gamma < \pi.
\end{equation}
The application of subsequent reflections across the sides of the triangle which are incident on a given vertex will generate a rotation around the given vertex.
The angle of this rotation is twice of the respective internal angle $\alpha$, $\beta$ or $\gamma$.
Since the angles are rational, one can see how this can lead to torsion elements $(xy)^p$, $(yz)^q$ and $(zx)^2$.
The triangle group above can be seen as the full space group symmetry of a potential hyperbolic crystal that we would like to construct.

The triangle group $ \Delta_{\lbrace p,q\rbrace }$ is a special case of a Coxeter group, i.e., a group with presentation
\begin{align}
    \braket{g_1, g_2, \ldots, g_n |  (g_i g_j)^{m_{ij}}  },
\end{align}
with the requirement that $ m_{ii} = 1$ and $m_{ij} \geq 2 $ if $i\neq j$. 
Which means that in our case, we have
\begin{align}
    m = \begin{pmatrix}
        1 & p & 2 \\
        p & 1 & q \\
        2 & q & 1
    \end{pmatrix},
\end{align}
with respect to the generators $x$, $y$, and $z$.
One can summarize the structure of a Coxeter group with the help of its Coxeter graph, where $m$ is interepreted as a weighted adjacency matrix and an edge is drawn between vertices if the weights is at least three.
This means in our case, we have
\begin{equation}
    \Delta_{\lbrace p,q \rbrace} =  \circ \overset{p}{-} \circ \overset{q}{-} \circ .
\end{equation}
The goal is to find a faithful representation of $\Delta_{\lbrace p,q \rbrace}$ which allows us to introduce the concept of generalized periodic boundary conditions.
Define the $\mathbb{R}$ vector space $V$ with its basis vectors labelled by the generators of $\Delta_{\lbrace p,q \rbrace}$, i.e., $S=\lbrace x,y,z \rbrace \subset \Delta_{\lbrace p,q \rbrace}$.
A typical element $v$ of $V$ thus has the form $ \vec{v} = a \vec{e}_x + b \vec{e}_y + c \vec{e}_z$.
Introduce the symmetric bilinear form $B \colon V\times V \to \mathbb{R}$ by its action on the basis vectors
\begin{equation}
    B(\vec{e}_s, \vec{e}_{s'}) \equiv - \cos \left( \frac{\pi}{ m_{ss'}} \right).
\end{equation}
It has the property that $ B(\vec{e}_s, \vec{e}_{s}) = 1$ and  $ B(\vec{e}_s, \vec{e}_{s'}) \leq 0$ for $s\neq s'$.
Define the subspaces
\begin{align}
    H_s & = \lbrace \vec{v} \in V ~|~ B(\vec{v},\vec{e}_s) = 0 \rbrace \cong V \setminus \mathbb{R} \vec{e}_{s},
    \\
    V_{ss'} & = \mathbb{R} \vec{e}_s \oplus  \mathbb{R} \vec{e}_{s'}.
\end{align}
Each point of $H_s$ is invariant under the action of $\sigma_s$, which can be seen directly from the definition of $\sigma_s$.
We can now attempt to construct a group action on $V$. Define $\sigma_s \colon V \to V$ by setting
\begin{equation}
    \sigma_s \vec{\lambda} = \vec{\lambda} - B(\vec{e}_s, \vec{\lambda}) \vec{e}_s .
\end{equation}
In the following we want to sketch the proof of the following proposition
\begin{proposition}
    The assignment $s\mapsto \sigma_s$ extends to a faithful representation $\sigma \colon \Delta_{\lbrace p,q \rbrace} \to \mathrm{GL}(V)$.
\end{proposition}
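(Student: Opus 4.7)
The plan is to follow the classical geometric representation argument due to Tits, as presented in Humphreys \cite{HumphreysBook1990}. The proof splits naturally into two steps: establishing that the assignment $s\mapsto \sigma_s$ extends to a well-defined group homomorphism, and then showing that this homomorphism has trivial kernel.

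For the first step, I would verify the defining Coxeter relations one at a time. The relation $\sigma_s^2=\mathrm{id}$ is immediate from the definition once one uses $B(\vec{e}_s,\vec{e}_s)=1$. For the braid-type relations $(\sigma_s\sigma_{s'})^{m_{ss'}}=\mathrm{id}$, I would decompose $V = V_{ss'}\oplus V_{ss'}^{\perp}$, the orthogonal complement being taken with respect to $B$. Since in the triangle-group setting every $m_{ss'}\in\{2,p,q\}$ is finite, the Gram matrix of $B$ on $V_{ss'}$ has determinant $\sin^2(\pi/m_{ss'})>0$, so $(V_{ss'},B|_{V_{ss'}})$ is a genuine Euclidean plane on which $\sigma_s$ and $\sigma_{s'}$ act as ordinary reflections whose mirror lines enclose an angle $\pi/m_{ss'}$. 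Their product is then a rotation by $2\pi/m_{ss'}$ and thus has order exactly $m_{ss'}$. On $V_{ss'}^{\perp}$ both operators restrict to the identity, so the relation holds globally.

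The harder step is faithfulness, for which I would introduce the root system $\Phi := \{\sigma(w)\vec{e}_s : w\in\Delta_{\{p,q\}},\, s\in S\}$ and establish, by a simultaneous induction on word length $\ell(w)$, two statements: (i) every $\alpha\in\Phi$ is either a non-negative or a non-positive real-linear combination of the simple roots $\{\vec{e}_s\}_{s\in S}$ (the positivity dichotomy), and (ii) $\sigma(w)\vec{e}_s$ is positive precisely when $\ell(ws)>\ell(w)$. Granting these, faithfulness follows quickly: if $\sigma(w)=\mathrm{id}$, then $\sigma(w)\vec{e}_s=\vec{e}_s$ is positive for every generator $s$, so $\ell(ws)>\ell(w)$ for all $s\in S$; but any reduced expression $w=s_1\cdots s_n$ with $n\geq 1$ gives $\ell(ws_n)=n-1<n=\ell(w)$, forcing $n=0$ and hence $w=1$.

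The main obstacle will be the simultaneous induction proving (i) and (ii). The inductive step requires controlling how $\sigma(ws)$ acts on $\vec{e}_{s'}$ in terms of the action of $\sigma(w)$ on the two-dimensional subspace $V_{ss'}$, and invoking the planar analysis from Step~1 for the dihedral subgroup $\langle s,s'\rangle$ to track signs of coefficients through the reflection $\sigma_s$. Once this bookkeeping is in place, the argument reduces to repeated application of the exchange condition for Coxeter groups. I should emphasize that the hyperbolicity assumption $1/p+1/q<1/2$ plays no role in faithfulness itself; only finiteness of all $m_{ss'}$ is used in Step~1, and the root combinatorics in Step~2 depend solely on the Coxeter diagram.
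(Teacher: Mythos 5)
Your proposal is correct and follows essentially the same route as the paper: both are the standard geometric-representation argument of Humphreys \cite{HumphreysBook1990}, verifying the Coxeter relations for the reflections $\sigma_s$ and then invoking Tits' theorem for faithfulness. The only cosmetic differences are that the paper checks the orders of $\sigma_x\sigma_y$, $\sigma_y\sigma_z$, $\sigma_z\sigma_x$ by explicitly computing the eigenvalues $1, e^{\pm 2\imath\alpha}$, etc., of the concrete $3\times 3$ matrices and simply cites Humphreys (Section 5.4) for faithfulness, whereas you verify the braid relations via the positive-definite rank-two decomposition $V=V_{ss'}\oplus V_{ss'}^{\perp}$ and outline the root-system/length induction yourself — both being the same underlying textbook argument.
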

First of all, the group multiplication of $\Delta_{\lbrace p, q \rbrace}$ corresponds to the composition of operations acting on $V$, i.e., we interpret a word $s_1 s_2 s_3 \cdots s_n $ as the composition $ \sigma_{s_1} \circ  \sigma_{s_2} \circ  \cdots \sigma_{s_n}$.
This means we adhere to the convention that we ``read'' a word from right to left.
It is then sufficient to verify the relations $(\sigma_{s_i} \sigma_{s_j})^{m_{ij}}$ in order to show that we indeed have a homomorphism onto $\mathrm{GL}(V)$.

By observing the action of $\sigma$ on the basis vectors, we can identify the matrix representation.
Take $ \vec{v}= \sum_{i=x,y,z}  v_i \vec{e}_i$. 
The application of $\sigma_s$ yields
\begin{align}
    \sigma_s \vec{v}=& \sum_{j=x,y,z}  v_j  ( \sigma_s \vec{e}_j )
     =  \sum_{i=x,y,z}  \left( \sum_{j=x,y,z}  (\sigma_s)_{ij}v_j \right) \vec{e}_i,
\end{align}
where we can identify
\begin{align}
    (\sigma_s)_{ij} &= \delta_{ij} - 2 B (\vec{e}_i, \vec{e}_j) \delta_{si} 
    \notag \\
    &= \delta_{ij} + 2 \cos(\pi/m_{ij}) \delta_{si}.
\end{align}
Note that this construction underlies the equivalence of categories between the category of group modules and the category of group representations over a given field $\mathbb{K}$:
\begin{equation}
    \mathbf{Rep(G)}_{\mathbb{K}} \rightleftarrows \mathbf{GMod}_{\mathbb{K}} .
\end{equation}
In retrospect, we therefore have indeed constructed a suitable $\mathbb{R}\Delta_{\lbrace p,q\rbrace}$-module before extracting the corresponding representation.
In matrix notation, we find
\begin{align}
    \sigma_x &= \begin{pmatrix}
        -1 & 2 \cos(\alpha) & 2\cos(\gamma) \\
        0 & 1 & 0 \\
        0 & 0 & 1
    \end{pmatrix},
    \\
    \sigma_y &= \begin{pmatrix}
        1 & 0 & 0 \\
        2 \cos(\alpha) & -1 & 2\cos(\beta) \\
        0 & 0 & 1
    \end{pmatrix},
    \\
    \sigma_z &= \begin{pmatrix}
        1 & 0 & 0 \\
        0 & 1 & 0 \\
        2 \cos(\gamma) & 2\cos(\beta) & -1
    \end{pmatrix}.
\end{align}
To get Mennicke's original result \cite{Mennicke1967}, replace $\sigma_s \to - \sigma_s^T$.
By construction, one has $\sigma_s \vec{e}_s = - \vec{e}_s$, which we expect from a reflection.
Note also that $\det \sigma_s = -1$.
Further, one finds
\begin{align}
    \sigma_s ( \sigma_s \vec{\lambda})
    & = 
    \sigma_s (\vec{\lambda} - 2 B(\vec{e}_s, \vec{\lambda}) \vec{e}_s) 
    \notag \\
    & = \vec{\lambda}- 4 B(\vec{e}_s, \vec{\lambda})  \vec{e}_s  +4 B(\vec{e}_s,  B(\vec{e}_s, \vec{\lambda}) \vec{e}_s)
    \notag \\
    & = \vec{\lambda} - 4 B(\vec{e}_s, \vec{\lambda})  \vec{e}_s +  4 B(\vec{e}_s, \alpha_{s})  B(\vec{e}_s,\vec{\lambda})  \vec{e}_s
    \notag \\
    & = \vec{\lambda} ,
\end{align}
and therefore $\sigma_s^2 = 1$, verifying the diagonal Coxeter relations.
Further, one finds that $\sigma_{xy} = \sigma_x \sigma_y$
has eigenvalues $1, e^{\pm 2i \alpha}$ and is therefore of order $p$.
Similarly, $\sigma_{yz}$ has eigenvalues $1, e^{\pm 2 i \beta}$ and is of order $q$, while $\sigma_{xz}$ has eigenvalues $1, e^{\pm 2 i \gamma}$ and therefore has order $2$.
Therefore $\sigma$ respects to relations of $\Delta_{\lbrace p,q\rbrace }$.
For a proof of the faithfulness of the representation $\sigma$, we refer to the book Humphreys (\cite{HumphreysBook1990} Section 5.4 p. 113).

\subsection{Representation of proper triangle groups}

The proper triangle group is defined as
\begin{align}
    \Delta_{\lbrace p,q\rbrace }^+ = \braket{A,B |  A^p, B^q, (AB)^2 }.
\end{align}
We can obtain a representation for the generators of $ \Delta_{\lbrace p,q\rbrace }^+$ from our previous knowledge on the geometric representation of $\Delta_{\lbrace p,q\rbrace }$, where we can make the identification
\begin{equation}
    A=xy, ~~B=yz,
\end{equation}
and thereby realize $\Delta_{\lbrace p,q\rbrace }^+$ as a subgroup of  $\Delta_{\lbrace p,q\rbrace }$.
Denote
\begin{align}
    \eta = \cos \alpha, ~~\zeta = \cos \beta, ~~ \delta = \cos \gamma = 0.
\end{align}
Then we find
\begin{align}
A &= \sigma_x \sigma_y =
\left(
\begin{array}{ccc}
 4 \eta ^2-1 & -2 \eta  & 2 \delta +4 \zeta  \eta  \\
 2 \eta  & -1 & 2 \zeta  \\
 0 & 0 & 1 \\
\end{array}
\right)
\\
B &= \sigma_y \sigma_z =
\left(
\begin{array}{ccc}
 1 & 0 & 0 \\
 4 \delta  \zeta +2 \eta  & 4 \zeta ^2-1 & -2 \zeta  \\
 2 \delta  & 2 \zeta  & -1 \\
\end{array}
\right)
\end{align}

\subsection{Relation to the Poincar\'e disk model}

\begin{figure}[t]
    \centering
    \includegraphics[width=0.8\linewidth]{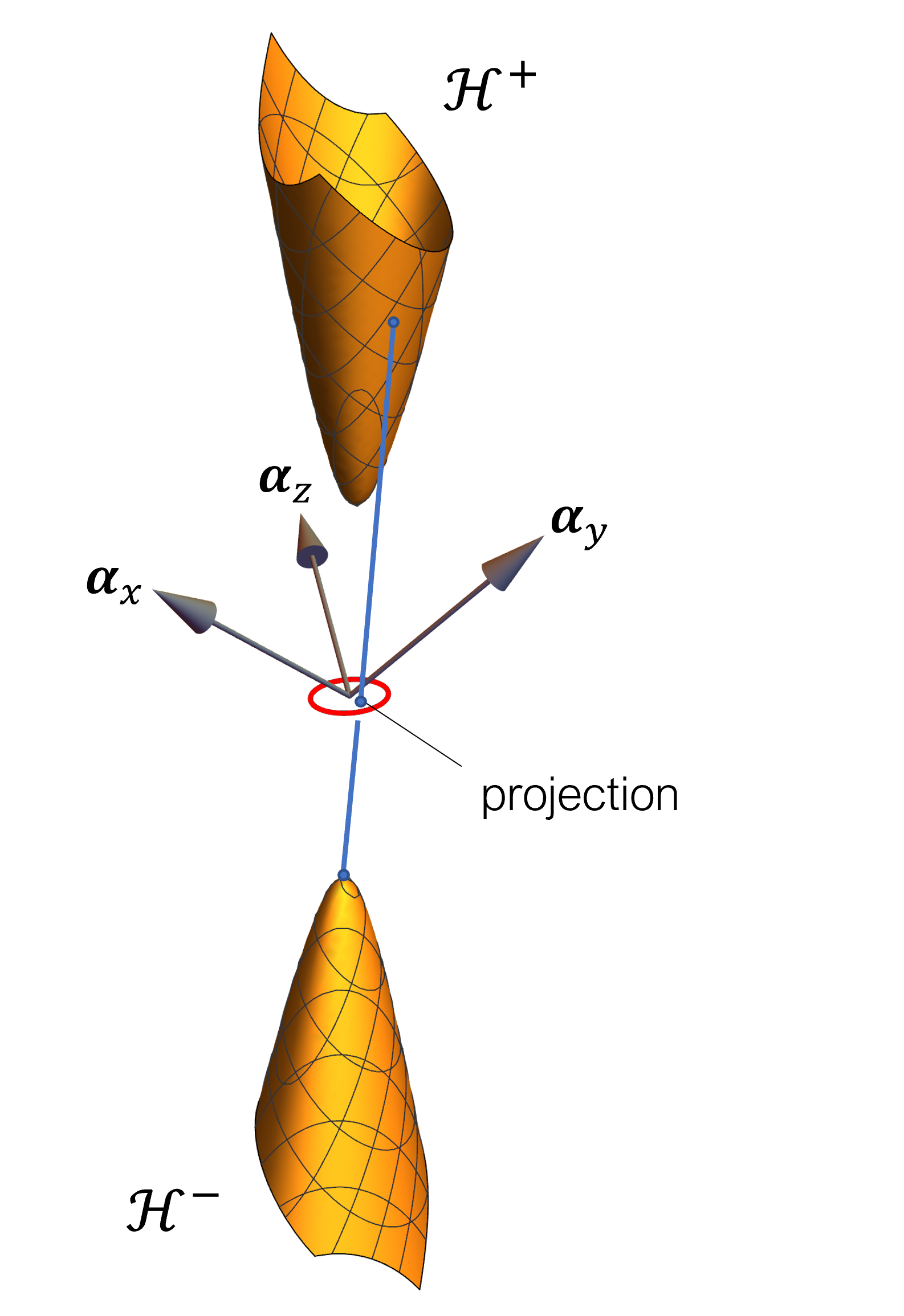}
    \caption{ {\bf The hyperboloid model and the Poincar\'e disk}. Mennicke's bilinear form $B$ can be used to define a two-sheet hyperboloid $\mathcal{H}^+ \cup \mathcal{H}^{-}$ where the positive sheet identifies with $\mathbb{D}$ through the projection which is illustrated in this figure.}
    \label{fig:my_label}
\end{figure}

We can interpret the quadratic form $B$ as a Minkowski metric on $V$.
This is because $B$ has the eigenvectors
\begin{align}
    \boldsymbol{\Gamma}_1 &= \left( -\frac{\zeta}{\eta},0, 1\right)^T, \\ 
    \boldsymbol{\Gamma}_2 &=\left( \frac{\eta}{\zeta},  -\frac{\Upsilon}{\zeta}, 1\right)^T, \\
    \boldsymbol{\Gamma}_3 &= \left( \frac{\eta}{\zeta},  \frac{\Upsilon}{\zeta}, 1\right)^T ,
\end{align}
with respect to the basis defined by $\boldsymbol{\alpha}_s$ and corresponding to the eigenvalues  $1$, $1+\Upsilon $ and $1-\Upsilon$, where $\Upsilon = \sqrt{\eta^2 + \zeta^2}$. 
Note that for $m>2$, one has $ \cos(2\pi/m) > 0$. 
This means that for all $p,q > 2$, $\exists \epsilon > 0$ such that
\begin{equation}
    \frac{1}{2} ( \cos ( 2 \pi / p ) +   \cos ( 2 \pi / q ) ) > \epsilon .
\end{equation}
As a consequence, one finds
\begin{align}
    \Upsilon &= \sqrt{\eta^2 + \zeta^2}
    \notag \\ &=
    \sqrt{\cos^2 (\pi/p) + \cos^2 (\pi/q)}
    \notag \\ &=
    \sqrt{1 + ( \cos (2\pi/p) + \cos (2\pi/q)/2}
    \notag \\ & > 
    \sqrt{1 + \epsilon}
    \notag \\ & > 1 ,
\end{align}
making using of the strict monotonicity of the square root. 
This means, if we parameterize $\vec{v}\in V$ as
\begin{align}
    \vec{v} &= q_1 \boldsymbol{\Gamma}_1  + \frac{q_2}{\sqrt{\Upsilon+1}} \boldsymbol{\Gamma}_2
    + \frac{q_0}{\sqrt{\Upsilon-1}} \boldsymbol{\Gamma}_3
    \notag \\
    &\equiv (q_0,q_1,q_2)_\Gamma .
\end{align}
with $q_0,q_1,q_2\in \mathbb{R}$, the quadratic form evaluates to
\begin{align}
    B(\vec{v},\vec{v}) = q_1^2 + q_2^2 - q^0_2 ,
\end{align}
the Minkowski quadratic form. Define a subset of $V$ by
\begin{align}
    \mathcal{H} &= \lbrace \vec{v} \in V ~|~ B(\vec{v},\vec{v})=-1 \rbrace 
    \notag \\
    & \cong \lbrace 
    (q_0,q_1,q_2)_\Gamma \in V ~|~  q_1^2 + q_2^2 - q^0_2 = -1
    \rbrace 
\end{align}
By construction, this surface is invariant under the action of the triangle group: $\Delta_{\lbrace p,q \rbrace } \mathcal{H} = \mathcal{H}$.
We have therefore shown that

\begin{proposition}
    The subset given by all $\vec{v}\in V$ such that $B(\vec{v},\vec{v})=-1$ defines a disconnected $\Delta_{\lbrace p,q \rbrace }$-invariant surface $\mathcal{H}$ in $V$ which is isomorphic to the two-sheet hyperboloid.
\end{proposition}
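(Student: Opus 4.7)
The plan is to diagonalize $B$ in the $\boldsymbol{\Gamma}$-eigenbasis already exhibited, recognize the resulting form as Minkowski, and then read off all three assertions (isomorphism with the two-sheet hyperboloid, $\Delta_{\{p,q\}}$-invariance, and disconnectedness) from standard properties of $\mathbb{R}^{2,1}$.

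First I would record that $B$ has signature $(2,1)$ on $V$. The eigenvalues $1$, $1+\Upsilon$, $1-\Upsilon$ of $B$ in the $\boldsymbol{\Gamma}$-basis have already been computed, and the sign $1-\Upsilon<0$ follows from $\Upsilon>1$, which was verified above using the hyperbolic condition $1/p+1/q<1/2$. In the rescaled coordinates $(q_0,q_1,q_2)_\Gamma$ introduced in the excerpt, $B$ becomes the standard Minkowski form $q_1^2+q_2^2-q_0^2$, so the defining equation of $\mathcal{H}$ reads $q_0^2-q_1^2-q_2^2=1$. The linear change of basis from $\{\vec{e}_s\}$ to the rescaled $\{\boldsymbol{\Gamma}_i\}$ is a linear isomorphism $V\to\mathbb{R}^{2,1}$ that identifies $\mathcal{H}$ with the unit two-sheet hyperboloid.

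For invariance, it suffices to show that each generator $\sigma_s$ preserves $B$, since $\Delta_{\{p,q\}}$ is generated by these reflections and the level sets of $B$ are then automatically stable. Using $\sigma_s\vec{\lambda}=\vec{\lambda}-2B(\vec{e}_s,\vec{\lambda})\vec{e}_s$ (the convention consistent with the explicit matrices above) and $B(\vec{e}_s,\vec{e}_s)=1$, a one-line bilinear expansion yields
\[
B(\sigma_s\vec{\lambda},\sigma_s\vec{\mu}) = B(\vec{\lambda},\vec{\mu}) - 4B(\vec{e}_s,\vec{\lambda})B(\vec{e}_s,\vec{\mu}) + 4B(\vec{e}_s,\vec{\lambda})B(\vec{e}_s,\vec{\mu}) = B(\vec{\lambda},\vec{\mu}),
\]
so $\sigma_s\in O(V,B)$ and $\sigma_s\mathcal{H}=\mathcal{H}$. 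Finally, disconnectedness is immediate from the explicit equation: on $\mathcal{H}$ one has $q_0^2=1+q_1^2+q_2^2\ge 1$, so $q_0$ never vanishes; the sign of $q_0$ is therefore a continuous, nowhere-zero function on $\mathcal{H}$, partitioning it into the components $\mathcal{H}^\pm=\mathcal{H}\cap\{\pm q_0\ge 1\}$, each diffeomorphic to $\mathbb{R}^2$ via $(q_0,q_1,q_2)\mapsto(q_1,q_2)$.

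The only genuinely nontrivial ingredient is the signature statement, i.e., the inequality $\Upsilon>1$, which is precisely the hyperbolic-geometry input and has already been dispatched via monotonicity of $\cos$ and $\sqrt{\cdot}$ in the regime $1/p+1/q<1/2$. The remaining pieces are linear algebra plus the short reflection-preserves-form check; a minor additional sanity check is that the $\boldsymbol{\Gamma}_i$ genuinely form a basis, which holds since they are eigenvectors of $B$ with pairwise distinct eigenvalues whenever $\Upsilon\ne 0,1$.
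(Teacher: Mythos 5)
Your proposal is correct and follows essentially the same route as the paper: diagonalize $B$ in the $\boldsymbol{\Gamma}$-basis, use $\Upsilon>1$ to identify the Minkowski signature, and read off the two-sheet hyperboloid from the level set $B(\vec{v},\vec{v})=-1$. The only difference is that you spell out explicitly what the paper dismisses as ``by construction'' --- the check $B(\sigma_s\vec{\lambda},\sigma_s\vec{\mu})=B(\vec{\lambda},\vec{\mu})$ giving $\Delta_{\{p,q\}}$-invariance, and the sign-of-$q_0$ argument for disconnectedness --- which is a welcome but not substantively different elaboration.
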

Define the positive sheet of this hyperboloid by
\begin{align}
    \mathcal{H}^+ = \lbrace (q_0,q_1,q_2)_\Gamma \in \mathcal{H} ~|~ q_0 > 0 \rbrace .
\end{align}
Then there is a well-known invertible mapping from $\mathcal{H}^+$ to the hyperbolic disk $\mathbb{D} \cup \lbrace \infty \rbrace$, given by the projection
\begin{equation}
   \mathcal{H}^+ \ni (q_0,q_1,q_2)_\Gamma \mapsto  \frac{q_1 + i q_2}{1 + q_0} \in \mathbb{D} \cup \lbrace \infty \rbrace.
\end{equation}

\subsection{Understanding the number field which underlies the geometric representation}

Our goal is to find the minimal field extension of the rationals which is capable of describing the generators of the geometric representation of $\Delta_{\lbrace p,q \rbrace}$.
To do so, we first need to introduce some notation.
Let $T_n \colon [-1,1] \to \mathbb{R}$ represent the Chebyshev polynomial of the first kind, i.e.,
$T_n( \cos \theta ) = \cos n \theta$ for all $\theta \in [0,2\pi]$ and introduce the re-scaled polynomials $P_n(x) = 2 T_n(x/2)$.
Since $T_n$ fulfills the recursion relations $T_0(x) = 1$, $T_1(x)=x$ and $T_{n+1}(x) =  2x T_n(x) - T_{n-1}(x)$, $P_n$ obeys
\begin{align}
    P_0(x) &=2 \\
    P_1(x) &=x \\
    P_{n+1}(x) &=  x P_n(x) - P_{n-1}(x) .
\end{align}
In particular, this implies that $P_n(x)$ (for $n>0$) is a polynomial of degree $n$ with leading term $1 x^n$. All coefficients from $P_n(x)$ are integer such that $P_n(x) \in \ZM[x]$.
Since the product of Chebyshev polynomials is given by $ T_n(x)  T_m(x) = ( T_{n+m}(x) - T_{|n-m|}(x) ) /2$, the corresponding rule
for $P_n$ reads
\begin{equation}
    P_n(x) P_m(x) = P_{n+m}(x)-P_{|n-m|}(x) .
\end{equation}
We can now make the following observation:
\begin{proposition}
    $\xi_n = 2 \cos( 2\pi / n )$ is an algebraic number.
\end{proposition}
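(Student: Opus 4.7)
The plan is to exploit the defining identity of the rescaled Chebyshev polynomials $P_n$ as already set up in the excerpt, so that $\xi_n$ appears as a root of an explicit integer polynomial. Since $T_n(\cos\theta) = \cos(n\theta)$ and $P_n(x) = 2T_n(x/2)$, one immediately gets the key functional equation
\begin{equation*}
P_n(2\cos\theta) = 2\cos(n\theta),
\end{equation*}
valid for all $\theta$. This is the only identity I will really need.

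Next I would substitute $\theta = 2\pi/n$ into this identity. The right-hand side collapses to $2\cos(2\pi) = 2$, while the left-hand side becomes $P_n(\xi_n)$. Therefore $\xi_n$ satisfies
\begin{equation*}
P_n(\xi_n) - 2 = 0.
\end{equation*}

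Finally, I would note that $P_n(x) - 2 \in \ZM[x]$ is a genuinely nonzero polynomial: by the recursion $P_0=2$, $P_1=x$, $P_{k+1}(x)=xP_k(x)-P_{k-1}(x)$, one checks by induction that $P_n$ is monic of degree $n$ (for $n\ge 1$) with integer coefficients, so $P_n(x)-2$ has leading term $x^n$ and is certainly not the zero polynomial. Hence $\xi_n$ is a root of a nonzero polynomial in $\ZM[x]$, which is exactly the definition of an algebraic number.

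The only potential subtlety is a careful justification that the recursion on $P_n$ really does give integer coefficients and a monic polynomial of degree $n$; but this is a one-line induction from the recurrence already displayed in the excerpt, so no real obstacle arises. The proof is essentially a single substitution, and the main conceptual content is the Chebyshev identity itself, which the paper has already packaged into the $P_n$ notation.
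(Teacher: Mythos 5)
Your proof is correct and follows essentially the same route as the paper: both evaluate the Chebyshev identity at $\theta=2\pi/n$ to conclude $P_n(\xi_n)-2=0$ and cite the integrality and monicity of $P_n$ (which the paper establishes just before the proposition) to exhibit a nonzero integer polynomial with $\xi_n$ as a root. The extra inductive check that $P_n$ is monic of degree $n$ is a harmless reiteration of facts already recorded in the surrounding text.
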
 
\begin{proof}
Using the properties of the Chebyshev polynomial, one has
\begin{equation}
    T_{n}(\xi_n/2) = \cos( n 2\pi / n ) = 1.
\end{equation}
Rearranging this result, we find that $h(\xi) = 2T_{n}(\xi_n/2) -2 =  P_n(\xi_n) -2 =0 $.
Therefore $\xi$ is a root of the polynomial $h(x) \in \ZM[x]$ and is therefore algebraic.
\end{proof}
It is now tempting to assume that $\ZM[\xi_n]$ is a free module with basis $\xi_n^0, \xi_n^1, \ldots, \xi_n^{n-1}$.
However, there are in general more relations among the $\xi_n$ since the polynomial $h(x)$ is in fact not irreducible over the integers.
For example, if $n=2s$, we can directly find another relation. 
This is because $ T_{s}(\xi_{2s}/2) = \cos( s \pi / s ) = -1$ or equivalently $P_s(\xi) + 2  = 0$.
By the multiplication rule, and using $P_0(x)=2$, one finds
\begin{align}
    P_s(x) P_s(x)  = P_{2s}(x)  - 2 = h(x),
\end{align}
which shows quite explicitly that it can be possible to split up the polynomial further over the integers.

A full reduction of the polynomial can be accomplished which follows from the properties of cyclotomic fields which we will very briefly introduce.
Denote the $n$-th root of unity by $\zeta_n = e^{2\pi i /n }$.
Then $\xi_n = \zeta_{n} + \zeta_{n}^{-1} $.
The field extension $\QM(\zeta_n) $ is known as the cyclotomic field.
The algebraic number field $\QM(\zeta_{n} + \zeta_{n}^{-1})$ is the maximal real subfield of $\QM(\zeta_n) $.
It contains $\ZM[ \zeta_{n} + \zeta_{n}^{-1} ]$ as its ring of integers and it is this ring that we would like to understand better.

By definition, $\zeta_n$ is a root of the polynomial $x^n-1$.
If $n$ is divisible by $d$, then $\zeta_d$ is another root of $x^n-1$ but also of $x^d-1$.
An $n$-th root of unity is primitive if it is not also an $m$-th root of unity for $m<n$.
The primitive $n$-th roots of unity are therefore given by
\begin{equation}
   \mathcal{R}_n=\lbrace \zeta_n^k ~|~1 \leq k \leq n, \mathrm{gcd}(k,n)=1 \rbrace .
\end{equation}
One has $|\mathcal{R}_n| = \varphi(n)$, where $\varphi(n)$ is Euler's totient function, counting the positive integers smaller or equal to $n$ which are coprime to $n$.
Define the polynomial
\begin{align}
    \Phi_n(x) = \prod_{\zeta \in  \mathcal{R}_n} ( x-\zeta ).
\end{align}
This polynomial is also known as the cyclotomic polynomial.
The roots of $\Phi_n(x)$ are given by the primitive $n$-th roots of unity and $\Phi_n(x)$ has degree $\varphi(n)$.
$\Phi_n(x)$ is known to be irreducible in both $\mathbb{Q}[x]$ and $\mathbb{Z}[x]$.

A polynomial $f(x) \in \ZM[x]$ is called palindromic if $f(x)^\ast \equiv x^{n} f(x^{-1}) = f(x)$, where $n$ is the degree of $f$ and it turns out that
\begin{proposition}
    $\Phi_n(x)$ is palindromic for $n>2$.
\end{proposition}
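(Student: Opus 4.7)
The plan is to verify the palindromic identity $x^{\varphi(n)} \Phi_n(x^{-1}) = \Phi_n(x)$ directly from the product definition $\Phi_n(x) = \prod_{\zeta \in \mathcal{R}_n}(x-\zeta)$. First I would compute
\begin{equation*}
x^{\varphi(n)}\Phi_n(x^{-1}) = \prod_{\zeta \in \mathcal{R}_n}(1-\zeta x) = (-1)^{\varphi(n)}\Big(\prod_{\zeta \in \mathcal{R}_n}\zeta\Big)\prod_{\zeta \in \mathcal{R}_n}(x-\zeta^{-1}),
\end{equation*}
so the whole question reduces to two elementary claims: (i) the map $\zeta \mapsto \zeta^{-1}$ permutes the set $\mathcal{R}_n$ of primitive $n$-th roots of unity, and (ii) the scalar prefactor $(-1)^{\varphi(n)}\prod_{\zeta \in \mathcal{R}_n}\zeta$ equals $1$ whenever $n>2$.

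For (i), writing $\zeta = \zeta_n^k$ with $\gcd(k,n)=1$ gives $\zeta^{-1} = \zeta_n^{n-k}$, and $\gcd(n-k,n) = \gcd(k,n) = 1$, so $\zeta^{-1} \in \mathcal{R}_n$. Hence inversion is an involution on $\mathcal{R}_n$ and $\prod_{\zeta}(x-\zeta^{-1}) = \prod_{\zeta}(x-\zeta) = \Phi_n(x)$. For (ii), I would pair each $\zeta \in \mathcal{R}_n$ with $\zeta^{-1}$: if $\zeta \neq \zeta^{-1}$ the pair contributes $\zeta \cdot \zeta^{-1} = 1$ to the product and contributes $2$ to $\varphi(n)$. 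A fixed point requires $\zeta^2 = 1$, i.e.\ $\zeta \in \{1,-1\}$; but $1$ is primitive only for $n=1$ and $-1$ is primitive only for $n=2$, so for $n>2$ there are no fixed points. Consequently, for $n>2$, $\varphi(n)$ is even (hence $(-1)^{\varphi(n)}=1$) and $\prod_{\zeta \in \mathcal{R}_n}\zeta = 1$, which yields $x^{\varphi(n)}\Phi_n(x^{-1}) = \Phi_n(x)$ as required.

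There is essentially no hard step here; the main subtlety is simply noticing that the hypothesis $n>2$ enters at exactly one place, namely to exclude the fixed points $\pm 1$ of inversion on the unit circle, and that this single exclusion simultaneously forces both $\varphi(n)$ to be even and the product of primitive roots to be $1$. For completeness I would remark that the cases $n=1,2$ genuinely fail to be palindromic in the sense defined (one has $\Phi_1(x)=x-1$ and $\Phi_2(x)=x+1$, whose coefficient sequences are anti-palindromic), which shows that the bound $n>2$ is sharp.
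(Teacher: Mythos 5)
Your proof is correct and follows essentially the same route as the paper's: expand $x^{\varphi(n)}\Phi_n(x^{-1})$ over the product, use that inversion permutes $\mathcal{R}_n$, and show the prefactor $(-1)^{\varphi(n)}\prod_{\zeta\in\mathcal{R}_n}\zeta$ equals $1$ for $n>2$ via the pairing $\zeta\leftrightarrow\zeta^{-1}$. Your added observations (that the absence of fixed points $\pm1$ simultaneously forces $\varphi(n)$ even and the product of primitive roots to be $1$, and that $n=1,2$ are genuinely anti-palindromic) are nice refinements of the same argument rather than a different approach.
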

\begin{proof}
    We check that the condition for palindromic polynomials is fulfilled:
    \begin{align}
        x^{\varphi(n)} \Phi_n(x^{-1})
        & = \prod_{\zeta \in  \mathcal{R}_n} ( 1-\zeta x )
        \notag \\
        & = \prod_{\zeta \in  \mathcal{R}_n} \zeta \prod_{\zeta \in  \mathcal{R}_n} ( \zeta^{-1}- x ) .
    \end{align}
    For $n>2$, the product $\prod_{\zeta \in  \mathcal{R}_n} \zeta $ evaluates to $1$. 
    This is because for $\xi_n^k \in \mathcal{R}_n$ one also has $\xi^{n-k}_n \in \mathcal{R}_n $ which cancel each other mutually in the product.
    Since $\varphi(n) \mod 2 = 0$ for $n>2$, one can then write
    \begin{align}
        x^{\varphi(n)} \Phi_n(x^{-1})
        & = (-1)^{\varphi(n)}\prod_{\zeta \in  \mathcal{R}_n} ( x-\zeta^{-1}) 
        \notag \\
        & = \prod_{\zeta \in  \mathcal{R}_n} ( x-\zeta) = \Phi_n(x),
    \end{align}
    and $\Phi_n(x)$ is palindromic.
\end{proof}


\begin{proposition} \label{proposition:palindroms}
    Let $f(x) \in \ZM[x]$ be palindromic, i.e, $f(x)^\ast \equiv x^{2n} f(x^{-1}) = f(x)$, where $2n$ is the degree of $f$.
    Then there exists a $g(x) \in \ZM[x]$ such that $x^{-n}f(x) = g(x + x^{-1})$ and where $g$ is of degree $n$.
\end{proposition}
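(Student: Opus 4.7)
The plan is to expand $f$ in coordinates that make the palindromic symmetry manifest, then recognize the resulting expression as an integer combination of the rescaled Chebyshev polynomials $P_k$ already introduced in the paper, evaluated at $x+x^{-1}$.

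First I would write $f(x) = \sum_{k=0}^{2n} a_k x^k$ with $a_k \in \ZM$ and invoke the palindromic hypothesis in the form $a_k = a_{2n-k}$. Dividing by $x^n$ and collecting symmetric pairs gives
\begin{equation}
x^{-n} f(x) = a_n + \sum_{k=1}^{n} a_{n+k}\bigl(x^k + x^{-k}\bigr).
\end{equation}
This isolates the problem to showing that each Laurent-symmetric monomial $x^k+x^{-k}$ is an integer polynomial in the single variable $y = x + x^{-1}$.

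The key lemma to establish is the identity $x^k + x^{-k} = P_k(x + x^{-1})$ for all $k \geq 0$, where $P_k$ is the rescaled Chebyshev polynomial already defined in the paper by $P_0 = 2$, $P_1(y) = y$, and $P_{k+1}(y) = y P_k(y) - P_{k-1}(y)$. I would prove this by induction on $k$: the base cases $k=0,1$ are immediate, and for the inductive step I compute
\begin{equation}
(x+x^{-1})(x^k + x^{-k}) = (x^{k+1} + x^{-(k+1)}) + (x^{k-1} + x^{-(k-1)}),
\end{equation}
so that $x^{k+1}+x^{-(k+1)} = y\,P_k(y) - P_{k-1}(y) = P_{k+1}(y)$ upon applying the inductive hypothesis. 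This is essentially the same recursion used earlier to establish $P_n \in \ZM[x]$.

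Combining these steps, I would define
\begin{equation}
g(y) = a_n + \sum_{k=1}^{n} a_{n+k}\, P_k(y),
\end{equation}
which lies in $\ZM[y]$ because each $P_k$ does. By construction $x^{-n} f(x) = g(x + x^{-1})$. For the degree claim, I would use the fact noted in the paper that $P_k$ has degree $k$ with leading coefficient $1$; since $f$ has degree $2n$, the coefficient $a_{2n}$ is nonzero, so the leading term of $g$ is $a_{2n} y^n$ and $\deg g = n$. I do not anticipate a serious obstacle here — the main point is the correct bookkeeping of indices and the explicit appeal to the Chebyshev-type recursion already adopted in the paper's notation.
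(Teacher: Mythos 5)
Your proposal is correct and follows essentially the same route as the paper's proof: the same symmetric-pair decomposition $x^{-n}f(x) = a_n + \sum_{k=1}^{n} a_{n+k}(x^{k}+x^{-k})$ followed by the identity $x^{k}+x^{-k} = P_k(x+x^{-1})$. The only difference is that you supply the short induction for that identity and spell out the degree/leading-coefficient bookkeeping, which the paper states without proof.
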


\begin{proof}
    Write $f(x) = \sum_{k=0}^{2n} a_k x^k $ with $a_k \in \ZM$.
    Since $f$ is palindromic, one has $a_{2n-k}=a_k$. 
    Then we give an explicit construction for $g$, by
    \begin{align}
        x^{-n} f(x)
        & = \sum_{k=0}^{2n} a_k x^{k-n} 
        \notag \\
       & = a_n +  \sum_{k=1}^{n}  a_{k+n} (x^{k} + x^{-k})
       \notag \\
       & = a_n +  \sum_{k=1}^{n}  a_{k+n} P_k(x + x^{-1}).
    \end{align}
    The last step follows from the identity $P_m(x+x^{-1}) = x^m + x^{-m}$.
    Therefore, the right-hand side is a polynomial in $x+x^{-1}$ with integer coefficients and it is of degree $n$.
\end{proof}

We are now ready to understand the elements of the proof of the following theorem

\begin{theorem}\label{theorem:lehmer}[Lehmer \cite{Lehmer1933}]
    There exists a minimal irreducible polynomial $\Psi_n(x)\in \ZM[x]$ which has $\xi_n$ as its root and which is of degree $\varphi(n)/2$ for $n>2$.
\end{theorem}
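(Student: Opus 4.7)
The plan is to apply Proposition \ref{proposition:palindroms} directly to the $n$-th cyclotomic polynomial $\Phi_n(x)$, which has just been shown to be palindromic of (even) degree $\varphi(n)$ for $n>2$. This yields a monic polynomial $\Psi_n\in\ZM[x]$ of degree $\varphi(n)/2$ satisfying
\begin{equation}
x^{-\varphi(n)/2}\,\Phi_n(x) \;=\; \Psi_n\!\bigl(x+x^{-1}\bigr).
\end{equation}
Monicness of $\Psi_n$ is inherited from $\Phi_n$ because the construction in Proposition \ref{proposition:palindroms} identifies the leading coefficient of $\Psi_n$ with the leading coefficient of $\Phi_n$. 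Evaluating at $x=\zeta_n$ and using $\Phi_n(\zeta_n)=0$ together with $\zeta_n+\zeta_n^{-1}=\xi_n$ immediately gives $\Psi_n(\xi_n)=0$, so $\Psi_n$ is the candidate for the minimal polynomial.

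The heart of the argument is irreducibility. I would argue by contradiction: suppose $\Psi_n=AB$ with $A,B\in\ZM[x]$ non-constant of degrees $a$ and $b$, $a+b=\varphi(n)/2$. Invert the construction of Proposition \ref{proposition:palindroms} by defining $\widetilde A(x):=x^a A(x+x^{-1})$ and $\widetilde B(x):=x^b B(x+x^{-1})$. Since the Laurent polynomial $(x+x^{-1})^j$ has lowest power $x^{-j}$, multiplication by $x^a$ with $j\leq a$ kills all negative powers, so $\widetilde A\in\ZM[x]$ of degree $2a$, and similarly $\widetilde B\in\ZM[x]$ of degree $2b$. Their product equals $x^{\varphi(n)/2}\Psi_n(x+x^{-1})=\Phi_n(x)$, producing a non-trivial factorization of $\Phi_n$ in $\ZM[x]$ and contradicting the known irreducibility of the cyclotomic polynomial. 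Hence $\Psi_n$ is irreducible in $\ZM[x]$.

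Minimality is then automatic: by Gauss's lemma, a monic $\ZM[x]$-irreducible is also $\QM[x]$-irreducible, so $\Psi_n$ is \emph{the} minimal polynomial of $\xi_n$ over $\QM$, and any $p\in\ZM[x]$ with $p(\xi_n)=0$ must satisfy $\deg p\geq \varphi(n)/2$. Consistency with the classical tower $[\QM(\zeta_n):\QM(\xi_n)]=2$ confirms $[\QM(\xi_n):\QM]=\varphi(n)/2$. The main obstacle I anticipate is the bookkeeping in the irreducibility step: verifying that the inverse map $A\mapsto x^a A(x+x^{-1})$ really lands in $\ZM[x]$ (rather than merely in $\ZM[x,x^{-1}]$) is the sole non-formal ingredient, but it is immediate from the expansion of $(x+x^{-1})^j$. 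Once that is checked, the whole argument reduces to a single application of the irreducibility of $\Phi_n$.
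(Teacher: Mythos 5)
Your proposal is correct and follows essentially the same route as the paper's (Lehmer's) proof: apply the palindromic-polynomial proposition to $\Phi_n$, read off the degree $\varphi(n)/2$, evaluate at $x=\zeta_n$ to get $\Psi_n(\xi_n)=0$, and transfer irreducibility from $\Phi_n$ to $\Psi_n$. The only difference is that you spell out the irreducibility-transfer step (the substitution $A(x)\mapsto x^{\deg A}A(x+x^{-1})$ and the resulting contradiction with the irreducibility of $\Phi_n$) and the Gauss-lemma minimality argument, which the paper merely asserts; these details are correct.
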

\begin{proof}
    We outline the proof given by Lehmer. By the previous proposition, we know that there exists a $\Psi_n(x)\in \ZM[x]$ such that for $n>2$
    \begin{align}
        x^{-\varphi(n)/2} \Phi_n(x) = \Psi_n(x + x^{-1} ) ,
    \end{align}
    since $\Phi_n$ is palindromic for $n>2$.
    Therefore $\Psi_n$ has degree $\varphi(n)/2$.
    The irreducibility of $\Psi_n$ follows from the irreducibility of $\Phi_n$.
    Since $\zeta_n^k \in \mathcal{R}_n$  is a root of $\Phi_n$, $\zeta_n^k + \zeta_n^{-k}$ is a root of $\Psi_n$.
    Therefore  $\Psi_n(x)$ is the minimal polynomial that has $\xi_n$ as its root.
\end{proof}

\begin{corollary}
    $\ZM[\xi_n] $ is a free $\ZM$-module with basis $\xi_n^0, \xi_n^1, \cdots \xi_n^{\varphi(n)/2-1}$ . 
    Therefore $\ZM[\xi_n] \cong_\ZM \ZM^{\varphi(n)/2}$.
\end{corollary}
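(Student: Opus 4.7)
The plan is to extract this corollary directly from Theorem \ref{theorem:lehmer}, with one small bookkeeping step to verify that $\Psi_n$ is \emph{monic}, which is what separates a $\ZM$-module basis from a mere $\QM$-vector space basis.

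First I would set $d = \varphi(n)/2$ and pin down the leading coefficient of $\Psi_n$. Tracing through the explicit construction in Proposition \ref{proposition:palindroms} applied to $f = \Phi_n$, the top degree term of $x^{-n}\Phi_n(x)$ comes from $a_{2n} P_n(x + x^{-1})$ with $a_{2n} = 1$ (since $\Phi_n$ is monic) and $P_n$ having leading coefficient $1$. Hence $\Psi_n$ is monic in $\ZM[x]$, and may be written $\Psi_n(x) = x^d + \sum_{i=0}^{d-1} a_i x^i$ with $a_i \in \ZM$.

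Next I would establish the spanning property. The relation $\Psi_n(\xi_n) = 0$ yields
\begin{equation*}
\xi_n^d \;=\; -\sum_{i=0}^{d-1} a_i\, \xi_n^i \;\in\; \ZM\text{-span}\{\xi_n^0,\ldots,\xi_n^{d-1}\}.
\end{equation*}
Multiplying this identity by $\xi_n$ and reducing the resulting $\xi_n^d$ term again (this is only possible because the leading coefficient is $1$, so no rationals are introduced), an immediate induction shows $\xi_n^k$ lies in the same $\ZM$-span for every $k \geq d$. Since $\ZM[\xi_n]$ is by definition the $\ZM$-span of all $\xi_n^k$, $k \geq 0$, the family $\{\xi_n^0,\ldots,\xi_n^{d-1}\}$ generates $\ZM[\xi_n]$ as a $\ZM$-module.

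Finally I would establish $\ZM$-linear independence. Suppose $\sum_{i=0}^{d-1} c_i \xi_n^i = 0$ for some integers $c_i$ not all zero; then $\xi_n$ is a root of the nonzero polynomial $f(x) = \sum_{i=0}^{d-1} c_i x^i \in \ZM[x]$ of degree less than $d$. By Theorem \ref{theorem:lehmer}, $\Psi_n$ is irreducible in $\ZM[x]$, and being monic, Gauss's lemma makes it irreducible in $\QM[x]$ as well; therefore it is the minimal polynomial of $\xi_n$ over $\QM$, and any polynomial annihilating $\xi_n$ must be a $\QM[x]$-multiple of $\Psi_n$, hence of degree at least $d$ if nonzero. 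This contradicts $\deg f < d$ and proves independence. Combining spanning and independence gives the asserted $\ZM$-module isomorphism $\ZM[\xi_n] \cong \ZM^{\varphi(n)/2}$. The only nontrivial point in the whole argument is the monic check in the first step; everything else is the standard ``minimal polynomial yields basis'' reasoning.
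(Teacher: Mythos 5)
Your proof is correct and takes essentially the route the paper intends: Lehmer's theorem gives an irreducible annihilating polynomial of degree $\varphi(n)/2$, and monicity (traced through the leading term of the palindromic reduction, since $\Phi_n$ and $P_n$ are monic) upgrades $1,\xi_n,\ldots,\xi_n^{\varphi(n)/2-1}$ from a $\QM$-basis to a $\ZM$-module basis via the standard spanning-plus-independence argument. The paper states the corollary without proof and only establishes the monicity of $\Psi_n$ in a later proposition (and uses it there for the reduction $r_n(x)=x^{\varphi(n)/2}-\Psi_n(x)$), so your front-loaded monic check supplies precisely the step the paper defers.
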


\begin{theorem} [Watkins and Zeitlin \cite{Watkins1993}] 
    The minimal polynomial $\Psi_n(x) \in \ZM[x]$ which has $\xi_n = 2 \cos(2\pi/n)$ as a root is implicitly given by
    \begin{align}
     \prod_{d|2s} \Psi_d(x ) &=    P_{s+ 1}(x) - P_{s- 1}(x),  
     \\ 
     \prod_{d|2s+1} \Psi_d(x ) &=  P_{s+ 1}(x) - P_{s}(x) .
    \end{align}
\end{theorem}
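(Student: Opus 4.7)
The plan is to lift both identities from $\ZM[x]$ to the Laurent ring $\ZM[y, y^{-1}]$ via the substitution $x = y + y^{-1}$. Because the ring homomorphism $\ZM[x] \to \ZM[y, y^{-1}]$, $x \mapsto y + y^{-1}$, is injective (its image is the symmetric part of $\ZM[y, y^{-1}]$), it suffices to verify each formula after this substitution. The payoff is that each $P_k$ becomes the explicit binomial $y^k + y^{-k}$, and, by Lehmer's theorem (already proved in the excerpt), each $\Psi_d$ with $d > 2$ becomes a cyclotomic factor up to a monomial.

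First I would compute the left-hand sides directly. Using $P_k(y+y^{-1}) = y^k + y^{-k}$ and factoring, one obtains
\begin{align*}
P_{s+1}(y+y^{-1}) - P_{s-1}(y+y^{-1}) &= y^{-s-1}(y^2 - 1)(y^{2s} - 1),\\
P_{s+1}(y+y^{-1}) - P_s(y+y^{-1}) &= y^{-s-1}(y - 1)(y^{2s+1} - 1).
\end{align*}
Then, using the classical cyclotomic factorization $y^n - 1 = \prod_{d|n} \Phi_d(y)$ together with $y - 1 = \Phi_1(y)$ and $y + 1 = \Phi_2(y)$, I rewrite each right-hand side as a product of cyclotomic polynomials. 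The crucial point is that the extra factor $(y^2 - 1)$ or $(y - 1)$ makes $\Phi_1$ (and $\Phi_2$ in the even case) appear with multiplicity two, while every other $\Phi_d$ with $d \mid n$ appears exactly once.

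The next step is to convert each $\Phi_d(y)$ into $\Psi_d(y + y^{-1})$. For $d > 2$, Lehmer's theorem gives $\Phi_d(y) = y^{\varphi(d)/2}\, \Psi_d(y + y^{-1})$. For the boundary values $d \in \{1, 2\}$, this identity fails (the polynomials are not palindromic of positive even degree), but a direct computation yields $\Phi_1(y)^2 = (y-1)^2 = y \cdot \Psi_1(y+y^{-1})$ with $\Psi_1(x) = x - 2$ (the minimal polynomial of $\xi_1 = 2$), and similarly $\Phi_2(y)^2 = y \cdot \Psi_2(y+y^{-1})$ with $\Psi_2(x) = x + 2$. These two quadratic identities absorb the doubled boundary factors into the uniform product $\prod_{d|n} \Psi_d(x)$.

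Finally, all bookkeeping of powers of $y$ reduces to the elementary identity $\sum_{d|n} \varphi(d) = n$. In the even case $n = 2s$, both $\Phi_1$ and $\Phi_2$ are doubled, producing a total exponent $2 + \tfrac{1}{2}\sum_{d|2s,\, d > 2} \varphi(d) = 2 + \tfrac{2s - 2}{2} = s + 1$, which exactly cancels the prefactor $y^{-s-1}$. The odd case $n = 2s + 1$ is analogous with only $\Phi_1$ doubling, yielding exponent $1 + \tfrac{2s}{2} = s + 1$ and the same cancellation. The main obstacle I anticipate is the edge-case handling at $d \in \{1, 2\}$: Lehmer's identity must be replaced by the quadratic relations above, and one must check that the absorbed factor of $y$ from each doubling combines cleanly with the totient sum so that the final exponent is exactly $s + 1$ in both parities. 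Once these details line up, the two identities follow by equating the two Laurent expressions.
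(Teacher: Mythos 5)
Your argument is correct: the substitution $x=y+y^{-1}$, the factorizations $P_{s+1}-P_{s-1}\mapsto y^{-s-1}(y^2-1)(y^{2s}-1)$ and $P_{s+1}-P_{s}\mapsto y^{-s-1}(y-1)(y^{2s+1}-1)$, the cyclotomic factorization $y^n-1=\prod_{d|n}\Phi_d(y)$, Lehmer's relation $\Phi_d(y)=y^{\varphi(d)/2}\Psi_d(y+y^{-1})$ for $d>2$, the special relations $(y\mp1)^2=y\,\Psi_{1,2}(y+y^{-1})$, and the totient bookkeeping $\sum_{d|n}\varphi(d)=n$ all check out, and injectivity of $x\mapsto y+y^{-1}$ legitimately transfers the identity back to $\ZM[x]$. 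The paper itself gives no proof of this theorem (it is quoted from Watkins and Zeitlin), but your verification uses exactly the machinery the supplement develops (the identity $P_m(x+x^{-1})=x^m+x^{-m}$, palindromicity of $\Phi_n$, and Lehmer's construction of $\Psi_n$), so it is essentially the standard argument behind the cited result, with the boundary cases $d\in\{1,2\}$ handled correctly.
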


This allows for a recursive algorithm. 
For example, in the WolframLanguage, the following code could be used
\begin{verbatim}    
P[n_] := 2 ChebyshevT[n, x/2];

Psi[n_]:=Psi[n]=(
  div = Drop[Divisors[n],-1];
  rhs = If[
    EvenQ[n], 
    P[n/2+1]-P[n/2-1],  
    P[(n-1)/2+1]-P[(n-1)/2]
  ];
  
  denominator = Times@@(Psi[#]&/@div);
	
  Return[
    Expand[
      FullSimplify[ 
        rhs/denominator 
      ]
    ]
  ]
);
\end{verbatim}
Using this method we find that

\begin{proposition}
    The polynomial
    \begin{equation}
       \Psi_{40}(x) =x^8-8 x^6+19 x^4-12 x^2+1,
    \end{equation}
    is the minimal polynomial which has $\xi_{2pq}$ as its root for $p=5$ and $q=4$.
\end{proposition}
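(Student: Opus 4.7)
The plan is to verify the claimed expression for $\Psi_{40}$ by directly applying the palindromic folding procedure from Proposition \ref{proposition:palindroms} to the cyclotomic polynomial $\Phi_{40}(x)$, rather than unwinding the Watkins--Zeitlin recursion at $n=40$, which would require computing every $\Psi_d$ for $d\in\{1,2,4,5,8,10,20\}$ before dividing into $P_{21}(x)-P_{19}(x)$. The direct route is shorter because $\Phi_{40}$ has a very compact closed form.

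First, I would record the degree: since $40 = 2^3 \cdot 5$, Euler's totient gives $\varphi(40) = 16$, so by Theorem \ref{theorem:lehmer} the minimal polynomial $\Psi_{40}(x)$ has degree $\varphi(40)/2 = 8$, consistent with the claim. Second, I would compute $\Phi_{40}(x)$ using the standard reduction $\Phi_{p^k m}(x) = \Phi_{pm}(x^{p^{k-1}})$ for $\gcd(p,m)=1$: applied with $p=2$, $k=3$, $m=5$, this yields $\Phi_{40}(x) = \Phi_{10}(x^4)$, and since $\Phi_{10}(x) = x^4 - x^3 + x^2 - x + 1$, we obtain
\begin{equation*}
\Phi_{40}(x) = x^{16} - x^{12} + x^8 - x^4 + 1.
\end{equation*}

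Third, I would invoke Proposition \ref{proposition:palindroms}: since $\Phi_{40}$ is palindromic of degree $16$, there exists $\Psi_{40}(y) \in \ZM[y]$ of degree $8$ with $x^{-8}\Phi_{40}(x) = \Psi_{40}(x + x^{-1})$. Concretely,
\begin{equation*}
x^{-8}\Phi_{40}(x) = (x^8 + x^{-8}) - (x^4 + x^{-4}) + 1,
\end{equation*}
and using the identity $P_k(x+x^{-1}) = x^k + x^{-k}$ (already established in the proof of Proposition \ref{proposition:palindroms}), this equals $P_8(y) - P_4(y) + 1$ with $y = x+x^{-1}$. Fourth, I would compute $P_8$ and $P_4$ via the recursion $P_{n+1} = y P_n - P_{n-1}$ with $P_0 = 2$, $P_1 = y$, obtaining $P_4(y) = y^4 - 4y^2 + 2$ and $P_8(y) = y^8 - 8y^6 + 20y^4 - 16y^2 + 2$. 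Substituting and simplifying gives
\begin{equation*}
\Psi_{40}(y) = y^8 - 8y^6 + 19y^4 - 12y^2 + 1,
\end{equation*}
matching the claim.

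Finally, the polynomial so constructed has $\xi_{40} = \zeta_{40} + \zeta_{40}^{-1}$ as a root (because $\zeta_{40}$ is a root of $\Phi_{40}$) and is irreducible by Theorem \ref{theorem:lehmer}, which transfers irreducibility of $\Phi_n$ to $\Psi_n$. The only subtle point, and the one worth writing out carefully, is the algebraic simplification in the fourth step: one must verify that the coefficients combine to exactly the stated values. This is a short but error-prone arithmetic check, and I view it as the only real obstacle; everything else is an application of named results already recorded in the excerpt.
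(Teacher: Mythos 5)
Your proof is correct, but it takes a genuinely different route from the paper's. The paper stays entirely inside the recursive machinery it has built: it applies the Watkins--Zeitlin theorem twice, using the divisor structure of $40$ and $20$ to get $\Psi_8(x)\Psi_{40}(x)=\bigl(P_{21}(x)-P_{19}(x)\bigr)/\bigl(P_{11}(x)-P_{9}(x)\bigr)$, computing $\Psi_8(x)=x^2-2$ by the same trick from the divisors of $8$ and $4$, and then dividing out. You bypass Watkins--Zeitlin altogether: you compute $\Phi_{40}(x)=\Phi_{10}(x^4)=x^{16}-x^{12}+x^8-x^4+1$ from the standard cyclotomic reduction and then fold it through the relation $x^{-\varphi(40)/2}\Phi_{40}(x)=\Psi_{40}(x+x^{-1})$ of Theorem \ref{theorem:lehmer}, together with $P_k(x+x^{-1})=x^k+x^{-k}$, to get $\Psi_{40}=P_8-P_4+1$. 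The arithmetic checks out: $P_4(y)=y^4-4y^2+2$ and $P_8(y)=y^8-8y^6+20y^4-16y^2+2$, so $P_8-P_4+1=y^8-8y^6+19y^4-12y^2+1$, and minimality/irreducibility and the fact that $\xi_{40}$ is a root transfer from $\Phi_{40}$ exactly as you argue (the folded polynomial is the same $\Psi_{40}$ as in Theorem \ref{theorem:lehmer}, since the representation of a palindromic Laurent polynomial in powers of $x+x^{-1}$ is unique). The one external ingredient you import is the identity $\Phi_{p^k m}(x)=\Phi_{pm}(x^{p^{k-1}})$, which is standard but not established in the paper; the paper's route avoids it and works only with the $P_n$ recursion, which is also what makes it directly implementable as the general-purpose algorithm given in the supplement. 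Your route is shorter for this particular $n$ because $\Phi_{40}$ happens to be sparse; the paper's is more uniform across arbitrary $\{p,q\}$.
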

\begin{proof}
    We have $2pq = 40$. The divisors of $40$ are  $1, 2, 4, 5, 8, 10, 20$ and  $40$.
    The divisor of $20$ are $1, 2, 4, 5, 10$ and $20$.
    Therefore,
    \begin{align}
        \frac{\prod_{d|40} \Psi_d(x ) }{\prod_{d|20} \Psi_d(x ) } = \Psi_8(x ) \Psi_{40}(x ) .
    \end{align}
    The left hand side can be computed from the theorem. 
    We just need to know $\Psi_8(x )$.
    Applying the same trick:
    \begin{align}
        \Psi_8(x ) &=  \frac{\prod_{d|8} \Psi_d(x ) }{\prod_{d|4} \Psi_d(x ) } 
         = \frac{  P_{5}(x) - P_{3}(x) }{  P_{3}(x) - P_{1}(x) }
         = x^2 - 2 .
    \end{align}
    This means we have
    \begin{align}
       \Psi_{40}(x )
        & = 
         \frac{1}{ ( x^2 - 2) } \frac{  P_{21}(x) - P_{19}(x) }{  P_{11}(x) - P_{9}(x) } 
        \notag \\
        & =   \left(x^8-8 x^6+19 x^4-12 x^2+1\right).
    \end{align}
\end{proof}

\begin{proposition}
    $\Psi_n(x)$ is monic for all $n \in \NM$, i.e., the coefficient of its leading term is equal to one.
\end{proposition}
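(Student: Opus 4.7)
The plan is to combine the construction of $\Psi_n$ from the palindromic reduction of $\Phi_n$ with the fact that $\Phi_n$ is monic, and then simply read off the leading coefficient.

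First, I would dispose of the small cases $n=1,2$ by direct computation: since $\xi_1=2$ and $\xi_2=-2$, the minimal polynomials are $\Psi_1(x)=x-2$ and $\Psi_2(x)=x+2$, both monic. So assume $n>2$, for which Proposition~\ref{proposition:palindroms} applies since $\Phi_n$ is palindromic of even degree $\varphi(n)$.

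Next, I would inspect the explicit construction produced inside the proof of Proposition~\ref{proposition:palindroms}. Writing $\Phi_n(x)=\sum_{k=0}^{\varphi(n)}a_kx^k$ with $a_{\varphi(n)-k}=a_k$, that proof gives
\begin{equation*}
x^{-\varphi(n)/2}\Phi_n(x)=a_{\varphi(n)/2}+\sum_{k=1}^{\varphi(n)/2}a_{k+\varphi(n)/2}\,P_k(x+x^{-1}),
\end{equation*}
so that $\Psi_n(y)=a_{\varphi(n)/2}+\sum_{k=1}^{\varphi(n)/2}a_{k+\varphi(n)/2}P_k(y)$. The key observation, already recorded in the preliminary discussion of $P_n$, is that $P_k$ has degree $k$ with leading term $1\cdot y^k$. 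Therefore only the $k=\varphi(n)/2$ summand contributes to the top-degree monomial of $\Psi_n$, and the leading coefficient of $\Psi_n$ equals $a_{\varphi(n)}$, the leading coefficient of $\Phi_n$.

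Finally, I would invoke the classical fact that the cyclotomic polynomial $\Phi_n(x)=\prod_{\zeta\in\mathcal{R}_n}(x-\zeta)$ is monic, so $a_{\varphi(n)}=1$. This forces the leading coefficient of $\Psi_n$ to equal $1$, completing the proof. There is really no obstacle here: the only thing to verify carefully is that the $P_k$ indeed carry leading coefficient $1$, which is immediate from the recursion $P_{n+1}=xP_n-P_{n-1}$ together with $P_1(x)=x$ and induction on $n$.
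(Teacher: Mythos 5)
Your proposal is correct and follows essentially the same route as the paper's proof: both reduce the case $n>2$ to the monicity of $\Phi_n$ and of the rescaled Chebyshev polynomials $P_k$ via the palindromic construction of Proposition~\ref{proposition:palindroms} and Theorem~\ref{theorem:lehmer}, and both check $n=1,2$ directly with $\Psi_1(x)=x-2$, $\Psi_2(x)=x+2$. You merely make explicit the step the paper leaves implicit, namely reading off the leading coefficient $a_{\varphi(n)}=1$ from the expansion $\Psi_n(y)=a_{\varphi(n)/2}+\sum_{k=1}^{\varphi(n)/2}a_{k+\varphi(n)/2}P_k(y)$.
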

\begin{proof}
    By constructions, the cyclotomic polynomials $\Phi_n(x)$ are monic for all $n \in \NM$.
    Since the re-scaled polynomials $P_n$ are monic as well, it follows from proposition \ref{proposition:palindroms} and theorem \ref{theorem:lehmer} that all $\Psi_n(x)$ are monic as well for $n>2$.
    For $n=1$, one has $\xi_n = 2$ and $\Psi_1(x)=x-2$ is monic as well. This is also true for the case $n=2$, where $\xi_n = -2$ and $\Psi_2(x)=x+2$ is monic.
\end{proof}

Since  $\Psi_n(x)$ is monic and $\Psi_n(\xi_n)=0$ with $\mathrm{deg}~ \Psi_n(x) = \varphi(n)/2$, we can easily express $\xi_n^{\varphi(n)/2}$ in terms of a polynomial with smaller degree.
This is because for $n>2$, we can define a polynomial $r_n(x) \in \ZM[x]$ by
\begin{align}
    r_n(x) = x^{\varphi(n)/2} - \Psi_n(x),
\end{align}
where we are guaranteed that $\mathrm{deg}~r_n(x) \leq \varphi(n)/2-1$ and where 
$r_n(\xi_n)$ is the reduced representation of $\xi_n^{\varphi(n)/2}$ in $\ZM[\xi_n]$ since $r_n(\xi_n)=\xi_n^{\varphi(n)/2}$.
For $n\leq 2$, no reduction is necessary since $\xi_n = \pm 2$.
Let $d= \varphi(n)/2$ and $d-2-k\geq0$, then
\begin{equation}
    \xi_n^{2d-2-k} = \xi_n^d \xi^{d-2-k} = \sum_{l=0}^{d-1} r_l \xi_n^{d-2-k+l},
\end{equation}
where we have introduced $r_n(\xi_n) = \sum_{l=0}^{d-1} r_l \xi_n^l$.
The degree of $\xi_n^{2d-2-k}$ (when viewed as a polynomial in $\xi_n$) has thus been reduced.
This can be used to implement the multiplication law of the ring extension on a computer by folding back each $\xi_n^{2d-2-k}$ starting from the largest exponent, followed by a gradual decent until all terms beyond the $\xi^{d-1}$ have been reduced.
As it has been eluded to earlier, we have a set-theoretic isomorphism 
$\psi\colon \mathbb{Z}^{\varphi(n)/2} \to \mathbb{Z}[\xi_n]$, which assigns a polynomial of degree $\varphi(n)/2$ to a set of $\varphi(n)/2$ integer coefficients. We can use $\psi$, in order to introduce a ring-structure  $\mathbb{Z}^{\varphi(n)/2}$ and therefore obtain a ring-isomorphism $(\mathbb{Z}[\xi_n], \cdot, +) \cong (\mathbb{Z}^{\varphi(n)/2}, \star, +)$,  where $\cdot$ and $+$ denote the product and addition of polynomials in $\mathbb{Z}[\xi_n]$.
While the addition on $\mathbb{Z}^{\varphi(n)/2}$ is given by the natural element-wise addition of the individual integer coefficients (which we denote by the same symbol for convenience), the induced $\star$-product on $\ZM^{\varphi(n)/2}$ is nontrivial due to the intricate back-folding. 
We want the $\star$-product to be consistent with the ring-structure of $\mathbb{Z}[\xi_n]$ and therefore require that
\begin{equation}
    \label{eq:ring_multiplication}
\begin{tikzcd}
	{\mathbb{Z}^{\varphi(n)/2} \times \mathbb{Z}^{\varphi(n)/2}} & {\mathbb{Z}^{\varphi(n)/2}} \\
	{\mathbb{Z}[\xi_n] \times \mathbb{Z}[\xi_n]} & {\mathbb{Z}[\xi_n]}
	\arrow["{\psi \times \psi}"', from=1-1, to=2-1]
	\arrow["\cdot"', from=2-1, to=2-2]
	\arrow["\star", from=1-1, to=1-2]
	\arrow["{\psi^{-1}}"', from=2-2, to=1-2]
\end{tikzcd}
\end{equation}
This means that for $\vec{c}, \vec{c}' \in \ZM^{\varphi(n)/2}$ we define
\begin{align}
    \vec{c} \star \vec{c}' = \psi^{-1}( \psi( \vec{c} ) \cdot  \psi( \vec{c}' ) ). 
\end{align}

To conclude the discussion, we can define the general linear group as
\begin{align}
    \mathrm{GL}(3, \ZM[\xi_{2pq}]) = \lbrace g \in \mathcal{M}_{3}( \ZM[\xi_{2pq} ] )  ~|~ g ~\text{invertible}  \rbrace.
\end{align}
and the special linear group by
\begin{align}
    \mathrm{SL}(3, \ZM[\xi_{2pq}]) = \lbrace g \in \mathrm{GL}(3, \ZM[\xi_{2pq}]) ~|~ \det  g = 1  \rbrace.
\end{align}
Note that invertibility in $\mathcal{M}_{3}( \ZM[\xi_{2pq} ] )$ is nontrivial since $\ZM[\xi_{2pq} ]$ is not a field: for $g \in \mathcal{M}_{3}( \ZM[\xi_{2pq} ] )$ to be invertible, one needs $\det  g$ to be invertible in $\ZM[\xi_{2pq} ]$.
Given these definitions,  we reach the following conclusion:
\begin{proposition}
    The image of the geometric representation $\sigma$ lands in $ \mathrm{GL}(3, \ZM [\xi_{2pq}] )$.
    We therefore realize $\Delta_{\lbrace p,q\rbrace}$ as a subgroup of $\mathrm{GL}(3, \ZM [\xi_{2pq}] )$ and
    $\Delta_{\lbrace p,q\rbrace}^{+}$ as a subgroup of $\mathrm{SL}(3, \ZM [\xi_{2pq}] )$.
\end{proposition}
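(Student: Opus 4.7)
The plan is to show directly that the entries of the three generating matrices $\sigma_x,\sigma_y,\sigma_z$ already lie in $\ZM[\xi_{2pq}]$, then argue closure under the group operations.

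First I would reduce the claim to a statement about the numbers $2\eta=2\cos(\pi/p)$ and $2\zeta=2\cos(\pi/q)$, since $2\delta=0$ and all other entries of $\sigma_x,\sigma_y,\sigma_z$ are in $\ZM\subset\ZM[\xi_{2pq}]$. The key identity is the Chebyshev relation $P_m(2\cos\theta)=2\cos(m\theta)$ used earlier in the supplement. Setting $\theta=\pi/(pq)$, so that $\xi_{2pq}=2\cos\theta$, gives
\begin{equation*}
P_q(\xi_{2pq})=2\cos(\pi/p)=2\eta,\qquad P_p(\xi_{2pq})=2\cos(\pi/q)=2\zeta.
\end{equation*}
Since $P_m(x)\in\ZM[x]$, both $2\eta$ and $2\zeta$ belong to $\ZM[\xi_{2pq}]$, and therefore $\sigma_x,\sigma_y,\sigma_z\in\mathcal{M}_3(\ZM[\xi_{2pq}])$.

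Next I would verify invertibility inside the ring of matrices over $\ZM[\xi_{2pq}]$. A direct computation (already carried out in the construction of $\sigma_s$) gives $\det\sigma_s=-1$ for $s\in\{x,y,z\}$, and $-1$ is a unit in $\ZM[\xi_{2pq}]$. Hence each generator belongs to $\mathrm{GL}(3,\ZM[\xi_{2pq}])$. Because $\mathrm{GL}(3,\ZM[\xi_{2pq}])$ is a group and $\sigma\colon\Delta_{\{p,q\}}\to\mathrm{GL}(V)$ is a homomorphism (as established by the proof of the previous proposition), every word in the generators lands in $\mathrm{GL}(3,\ZM[\xi_{2pq}])$, proving the first assertion.

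For the statement about $\Delta_{\{p,q\}}^+$, I would use that this group is generated (via the embedding into $\Delta_{\{p,q\}}$) by $A=\sigma_x\sigma_y$ and $B=\sigma_y\sigma_z$, both of which have determinant $(-1)(-1)=1$. Multiplicativity of the determinant then shows that every element of the image of $\Delta_{\{p,q\}}^+$ has determinant $1$, so the image lies in $\mathrm{SL}(3,\ZM[\xi_{2pq}])$. There is no substantive obstacle here beyond identifying $\xi_{2pq}$ as the correct generator: the real content is the Chebyshev-based observation that the cosines at the two different angles $\pi/p$ and $\pi/q$ can be uniformly expressed as integer polynomials in the single quantity $2\cos(\pi/(pq))$, which is exactly why $\ZM[\xi_{2pq}]$ is the natural ring in which the representation is defined.
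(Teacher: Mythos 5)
Your proposal is correct and follows essentially the same route as the paper: the core step in both is the Chebyshev identity $P_q(\xi_{2pq})=2\cos(\pi/p)$, $P_p(\xi_{2pq})=2\cos(\pi/q)$, which places the entries of $\sigma_x,\sigma_y,\sigma_z$ in $\ZM[\xi_{2pq}]$. You merely make explicit the closure and determinant bookkeeping ($\det\sigma_s=-1$ a unit, so the generators lie in $\mathrm{GL}(3,\ZM[\xi_{2pq}])$, and $\det A=\det B=1$ puts $\Delta^+_{\{p,q\}}$ in $\mathrm{SL}(3,\ZM[\xi_{2pq}])$), which the paper leaves implicit after having noted $\det\sigma_s=-1$ earlier.
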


\begin{proof}
    Due to the identities
    \begin{align}
    P_p(\xi_{2pq}) &= 2 T_{p}(\xi_{2pq}/2 )   = 2\cos( \beta) ,
    \\
    P_q(\xi_{2pq}) &= 2 T_{q}(\xi_{2pq}/2 )  = 2\cos( \alpha) ,
\end{align}
    the matrix elements of the generators in the  geometric representation take values in the ring $\ZM[\xi_{2pq}] \subset \QM (\xi_{2pq})$.
\end{proof}
Using these two identities, we can re-interpret the previously obtained matrix matrix representation in the following way:
\begin{align}
    \sigma_x &= \begin{pmatrix}
        -1 & P_q(\xi_{2pq}) & 0 \\
        0 & 1 & 0 \\
        0 & 0 & 1
    \end{pmatrix},
    \\
    \sigma_y &= \begin{pmatrix}
        1 & 0 & 0 \\
        P_q(\xi_{2pq}) & -1 & P_p(\xi_{2pq}) \\
        0 & 0 & 1
    \end{pmatrix},
    \\
    \sigma_z &= \begin{pmatrix}
        1 & 0 & 0 \\
        0 & 1 & 0 \\
        0 & P_p(\xi_{2pq}) & -1
    \end{pmatrix}.
\end{align}
And we again set $A = \sigma_x \sigma_y  $ and $B=\sigma_y \sigma_z$.
Note that for case $q=3$ (or $p=3$), one has $2 \cos(\pi/3) = 1$.
For example, for $q=3$, it would then be sufficient to consider the ring extension $\ZM[\xi_{2p}]$.
Since $\xi_{2p} = 2 \cos(\alpha)$, one then simply has
\begin{align}
    \sigma_x &= \begin{pmatrix}
        -1 & \xi_{2p} & 0 \\
        0 & 1 & 0 \\
        0 & 0 & 1
    \end{pmatrix},
    \\
    \sigma_y &= \begin{pmatrix}
        1 & 0 & 0 \\
        \xi_{2p} & -1 & 1 \\
        0 & 0 & 1
    \end{pmatrix},
    \\
    \sigma_z &= \begin{pmatrix}
        1 & 0 & 0 \\
        0 & 1 & 0 \\
        0 & 1& -1
    \end{pmatrix}.
\end{align}
Another simplification occurs, when $p=q$. 
In this case, it is again sufficient to consider the
 extension $\ZM[\xi_{2p}]$ with
\begin{align}
    \sigma_x &= \begin{pmatrix}
        -1 & \xi_{2p} & 0 \\
        0 & 1 & 0 \\
        0 & 0 & 1
    \end{pmatrix},
    \\
    \sigma_y &= \begin{pmatrix}
        1 & 0 & 0 \\
        \xi_{2p} & -1 & \xi_{2p} \\
        0 & 0 & 1
    \end{pmatrix},
    \\
    \sigma_z &= \begin{pmatrix}
        1 & 0 & 0 \\
        0 & 1 & 0 \\
        0 & \xi_{2p}& -1
    \end{pmatrix}.
\end{align}

\begin{table*}[t]
\caption{ {\bf Irreducible polynomials for $\xi_{2pq}=2 \cos(\pi/(pq))$}. For $p\leq 8$ and $q \leq p$, we list all admissible hyperbolic triangles, sorted by ascending hyperbolic area.
The last column contains the minimal irreducible polynomial $\Psi_{2pq}(x) $, which has $\xi_{2pq}$ as a root.}
\centering
\scalebox{1}{
\begin{tabular}{ccccl}
\toprule
$p$  & $q$ & Area & $|\Delta^+_{\lbrace p,q\rbrace }\mod 2|$ & $\Psi_{2pq}(x) $ for $p\neq q \neq 3$  and $\Psi_{2p}(x) $  for $q=3$ or $p=q$\\
\midrule
$7$ & $3$ & $\frac{\pi }{42}$ & 504 & $x^3-x^2-2 x+1$ \\ 
$8$ & $3$ & $\frac{\pi }{24}$ & 96 & $x^4-4 x^2+2$ \\ 
$5$ & $4$ & $\frac{\pi }{20}$ & 160 & $x^8-8 x^6+19 x^4-12 x^2+1$ \\ 
$6$ & $4$ & $\frac{\pi }{12}$ & 24 & $x^8-8 x^6+20 x^4-16 x^2+1$ \\ 
$5$ & $5$ & $\frac{\pi }{10}$ & 80 & $x^2-x-1$ \\ 
$7$ & $4$ & $\frac{3 \pi }{28}$ & 896 & $x^{12}-12 x^{10}+53 x^8-104 x^6+86 x^4-24 x^2+1$ \\ 
$8$ & $4$ & $\frac{\pi }{8}$ & 16& $x^{16}-16 x^{14}+104 x^{12}-352 x^{10}+660 x^8-672 x^6+336 x^4-64 x^2+2$ \\ 
$6$ & $5$ & $\frac{2 \pi }{15}$ & 60& $x^8-7 x^6+14 x^4-8 x^2+1$ \\ 
$7$ & $5$ & $\frac{11 \pi }{70}$ & 262080 & $x^{12}+x^{11}-12 x^{10}-11 x^9+54 x^8+43 x^7-113 x^6-71 x^5+110 x^4+46 x^3-40 x^2-8 x+1$ \\ 
$6$ & $6$ & $\frac{\pi }{6}$ & 12 & $x^2-3$ \\ 
$8$ & $5$ & $\frac{7 \pi }{40}$ & 2560 & $x^{16}-16 x^{14}+104 x^{12}-352 x^{10}+659 x^8-664 x^6+316 x^4-48 x^2+1$ \\ 
$7$ & $6$ & $\frac{4 \pi }{21}$ & 504 & $x^{12}-11 x^{10}+44 x^8-78 x^6+60 x^4-16 x^2+1$ \\ 
$8$ & $6$ & $\frac{5 \pi }{24}$ & 96 & $x^{16}-16 x^{14}+104 x^{12}-352 x^{10}+660 x^8-672 x^6+336 x^4-64 x^2+1$ \\ 
$7$ & $7$ & $\frac{3 \pi }{14}$ & 448 & $x^3-x^2-2 x+1$ \\ 
$8$ & $8$ & $\frac{\pi }{4}$ & 32 & $x^4-4 x^2+2$ \\ 
\bottomrule
\end{tabular}}
\label{table:irreducible_polynomials}
\end{table*}

\subsection{Definition of the Y-junction}

In this section, we explain the details on how the Y-junction in the manuscript is constructed. 
The goal is to separate the hyperbolic disk into three distinct regions with different topological phases, by assigning the three Hamiltonians
\begin{align}
    H_1 & = \epsilon (1-2 P_A) + (1-\epsilon)  \Delta \\
    H_2 & = \epsilon (1-2 P_B) + (1-\epsilon)  \Delta \\
    H_3 & = \epsilon (1-2 P_C) + (1-\epsilon)  \Delta  ,
\end{align}
where $\epsilon=0.8$ and where $\Delta = (A+B)/4+ \mathrm{h.c.}$, is the Laplace operator. 
As a first step, we define three regions $Y_i$ bounded by three geodesics $\gamma_i$ which originate at $z=0$  and asymptotically meet the following points at infinity:
\begin{align}
    Y_1^\infty &= e^{i \phi_Y}
    \\
    Y_2^\infty & = e^{2\pi i /3}  Y_1^\infty
    \\
    Y_3^\infty & = e^{2\pi i /3}  Y_2^\infty,
\end{align}
where $\phi_Y$ determines the alignment of the junction. 
We consider only the case $\phi_Y = \alpha /2$ in the manuscript.
Let $\arg z$ be the argument of the complex number $z$.
We take $\arg$ to be the principal branch $-\pi < \arg z \leq \pi $.
We can use $\arg z$, to assign the desired local phase of our Y-junction. 
First, we define the relative angles
\begin{align}
  \kappa_i(z) = \arg z^\ast Y_i^\infty ,
\end{align}
and then we make the assignment
\begin{align}
    p(z) = \left\lbrace 
\begin{array}{ll}
     1,&  \text{for}~\kappa_1(z) \leq 0 ~\text{and}~ \kappa_2(z) > 0\\
     2,&  \text{for}~\kappa_2(z) \leq 0 ~\text{and}~ \kappa_3(z) > 0 \\
     3,&  \text{for}~\kappa_3(z) \leq 0 ~\text{and}~ \kappa_1(z) > 0 
\end{array}
    \right.
\end{align}
The discontinuous function $p(z)$ therefore assigns a label to each of three sectors which we want to equip with the respective Hamiltonians.
In other words, if the $z$-coordinate is deep into region $p(z)=i$, we want the local Hamiltonian to resemble our previously defined $H_i$.
However, instead of an abrupt junction of the different phases, we rather want to consider a smooth transition from one phase to another with a controllable parameter $\ell$ for the width of the domain wall region. 
To accomplish this, we define the Hamiltonian of the resulting Y-junction by
\begin{align}
    H_\mathrm{Y}(z,z') =  \sum_{l=1}^3 \chi_l(\mu(z,z')) H_i(z,z'),
\end{align}
where $\chi_l$ is a smooth partition of unity:
\begin{align}
    \sum_{l=1}^3 \chi_l(z) = 1. 
\end{align}
Note that we now passed to a representation of the Hamiltonian which refers to coordinates in the hyperbolic disk. 
This is possible, by assigning to each group element $g$ of $\Delta^+_{\lbrace p,q\rbrace}$ (or one of its finite subgroups) a position $g z_0$, where $g$ acts via Moebius transformation on the seed coordinate $z_0$.
The function $\mu(z,z')$ assigns the geodesic midpoint of $z$ and $z$', were we refer to the following theorem:
\begin{theorem}[Wang et al. \cite{Wang2019}] Let $z,z' \in \mathbb{D}$, then the hyperbolic midpoint $\mu(z,z') \in \mathbb{D}$ along the geodesic from $z$ to $z$' with the property $d(z,\mu) = d(z',\mu) = d(z,z')/2$ is given by
\begin{equation}
    \mu(z,z') = \frac{z' (1- |z|^2) + z (1- |z'|^2) }{1- |z|^2 |z'|^2 + A[z,z'] \sqrt{(1- |z|^2)(1- |z'|^2)} },
\end{equation}
where $A[z,z']$ is the Ahlfors bracket
\begin{equation}
    A[z,z'] = (1- |z|^2) (1- |z'|^2) + |z-z'|^2 .
\end{equation}
\end{theorem}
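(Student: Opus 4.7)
The strategy is to exploit the homogeneity of the Poincar\'e disk under its group of disk-preserving M\"obius automorphisms. First I would apply the hyperbolic isometry $\phi_z(w) = (w-z)/(1-\bar{z} w)$, which sends $z \mapsto 0$ and $z' \mapsto u := \phi_z(z')$. Since $\phi_z$ preserves the hyperbolic metric, it preserves midpoints, so $\mu(z,z') = \phi_z^{-1}(\mu(0,u))$. This reduces the problem to the case where one endpoint is the origin.

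For this base case, the hyperbolic geodesic through $0$ and $u$ is just the Euclidean radial segment, and the distance formula $d(0,w) = \tanh^{-1}|w|$ implies that $\mu(0,u) = \lambda u$ for some real $\lambda \in (0,1)$ determined by $\tanh^{-1}(\lambda|u|) = \tfrac{1}{2}\tanh^{-1}|u|$. The hyperbolic half-angle identity $\tanh(x/2) = (1-\sqrt{1-\tanh^2 x})/\tanh x$ then yields the closed form
\[
\mu(0,u) = \frac{1-\sqrt{1-|u|^2}}{|u|^2}\, u.
\]

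The next step is to rewrite $u$ and $|u|^2$ in terms of the original coordinates using the standard identities $|u|^2 = |z-z'|^2/|1-\bar{z}z'|^2$ and $1 - |u|^2 = (1-|z|^2)(1-|z'|^2)/|1-\bar{z}z'|^2$, combined with the elementary expansion
\[
|1-\bar{z}z'|^2 = (1-|z|^2)(1-|z'|^2) + |z-z'|^2 = A[z,z'],
\]
which is precisely where the Ahlfors bracket enters the computation and identifies $|1-\bar{z}z'|$ with $\sqrt{A[z,z']}$.

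Finally, apply $\phi_z^{-1}(w) = (w+z)/(1+\bar{z}w)$ to $\mu(0,u)$ and simplify. The main obstacle is this last algebraic simplification: the expression produced by direct substitution is not manifestly symmetric under $z \leftrightarrow z'$, and one must rationalize it by multiplying numerator and denominator by an appropriate conjugate of $(1-\bar{z}z')$ in order to remove the complex cross-terms. Exploiting $(1-\bar{z}z')(1-z\bar{z}') = A[z,z']$ and collecting terms, the numerator should collapse to the symmetric form $z'(1-|z|^2) + z(1-|z'|^2)$ and the denominator to the stated combination of $1 - |z|^2|z'|^2$ with the Ahlfors correction. This final step carries essentially all of the computational work and is where care is required to verify the precise coefficients in the stated formula.
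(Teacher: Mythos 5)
The paper does not prove this statement at all—it is imported verbatim from Wang et al.\ \cite{Wang2019}—so there is no internal argument to compare against; judged on its own terms, your strategy is the natural one and it works. The reduction by the isometry $\phi_z$, the radial midpoint obtained from the half-angle identity, and the identities $|u|^2=|z-z'|^2/|1-\bar z z'|^2$, $1-|u|^2=(1-|z|^2)(1-|z'|^2)/|1-\bar z z'|^2$ and $|1-\bar z z'|^2=(1-|z|^2)(1-|z'|^2)+|z-z'|^2$ are all correct. The final step you defer is also lighter than you anticipate: writing $s=\sqrt{1-|u|^2}$, your radial formula simplifies to $\mu(0,u)=u/(1+s)$, hence $\mu(z,z')=\bigl(u+z(1+s)\bigr)/\bigl(1+s+\bar z u\bigr)$; multiplying numerator and denominator by $1-\bar z z'$ gives $\mu(z,z')=\bigl(z'(1-|z|^2)+s\,z(1-\bar z z')\bigr)/\bigl((1-|z|^2)+s(1-\bar z z')\bigr)$, and a single cross-multiplication (using $s(1-\bar z z')\cdot\overline{(1-\bar z z')}/|1-\bar z z'|$ real) shows this equals the symmetric expression, so the route you outline does close.

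One caveat you must not gloss over when you ``verify the precise coefficients'': the computation produces the coefficient $|1-\bar z z'|=\sqrt{(1-|z|^2)(1-|z'|^2)+|z-z'|^2}$ multiplying $\sqrt{(1-|z|^2)(1-|z'|^2)}$ in the denominator, i.e.\ $\sqrt{A[z,z']}$ with the definition of $A$ printed here. As literally stated, with $A[z,z']$ un-square-rooted, the displayed formula is not correct—it already fails the sanity check $\mu(z,z)=z$—whereas in Wang et al.\ the Ahlfors bracket carries the square root. Your own intermediate remark identifying $|1-\bar z z'|$ with $\sqrt{A[z,z']}$ is exactly where this surfaces, so rather than asserting that the denominator ``collapses to the stated combination,'' you should carry the algebra through and note that your conclusion matches the quoted theorem only after that square root is restored.
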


A natural way to construct the partition $\chi_l$ explicitly can be obtained by constructing the shortest distance into and out-of a given phase region.
Whenever $-\pi/2 < \kappa_i(z) < \pi/2$ we can construct a right triangle in the hyperbolic plane to calculate the shortest geodesic distance to the phase boundary $Y_i$, which we denote by $\delta_i(z)$.
If $\kappa_i(z)$ is outside of this region, the shortest distance $\delta_i(z)$ to the phase boundary is a straight line from $z$ to 0.
In combination with the law of hyperbolic sines it then follows that
\begin{align}
    \sinh \delta_i(z) = \left\lbrace\begin{array}{ll}
    | \sin\varphi_i | \sinh d(0,|z|) & ,|\kappa_i(z)| < \frac{\pi}{2} \\
     \sinh d(0,|z|)  & ,  \text{otherwise} \\
    \end{array}\right.
\end{align}
Using these tools, we can define $D_i(z)$  as the shortest, signed, distance out-of, or into phase region $i$ depending on the desired region.
Recall that the label of desired local phase is given by $p(z)\in \lbrace 1,2,3 \rbrace$.
Now we can define
\begin{align}
    D_i(z) =
    \left\lbrace 
\begin{array}{ll}
    - \min( \delta_{i+1}(z),\delta_{i+2}(z)),  & i = p(z) \\
    \delta_{i}(z).
\end{array}
    \right.
\end{align}
where it is understood that $\delta_{3+k}(z) = \delta_{k}(z)$, for $0 \leq k \leq 2$.
Then we define sigmoids
\begin{equation}
    \sigma_i(z) = ( 1 + e^{D_i(z)/\ell} )^{-1},
\end{equation}
where $\ell$ controls the width of the transition region.
If $z$ is contained deep in phase region $k$ we are guaranteed that $ \sigma_k(z) \approx 1$, while $ \sigma_{i}(z) \approx 0$ for the other phases $i \neq k$.
We are finally ready to define
\begin{align}
    \chi_i(z) =  \frac{\sigma_i( z )}{\sigma_1( z )+\sigma_2(z) +\sigma_3(z)} .
\end{align}
In figure Fig.(\ref{fig:Y_junction_design}), we give an example of how the partitions look like for $\ell=0.1,0.25, 0.5$. The partition used in the main text is generated with $\ell=0.1$.

\begin{figure}[hbt]
    \centering
\includegraphics[width=\linewidth]{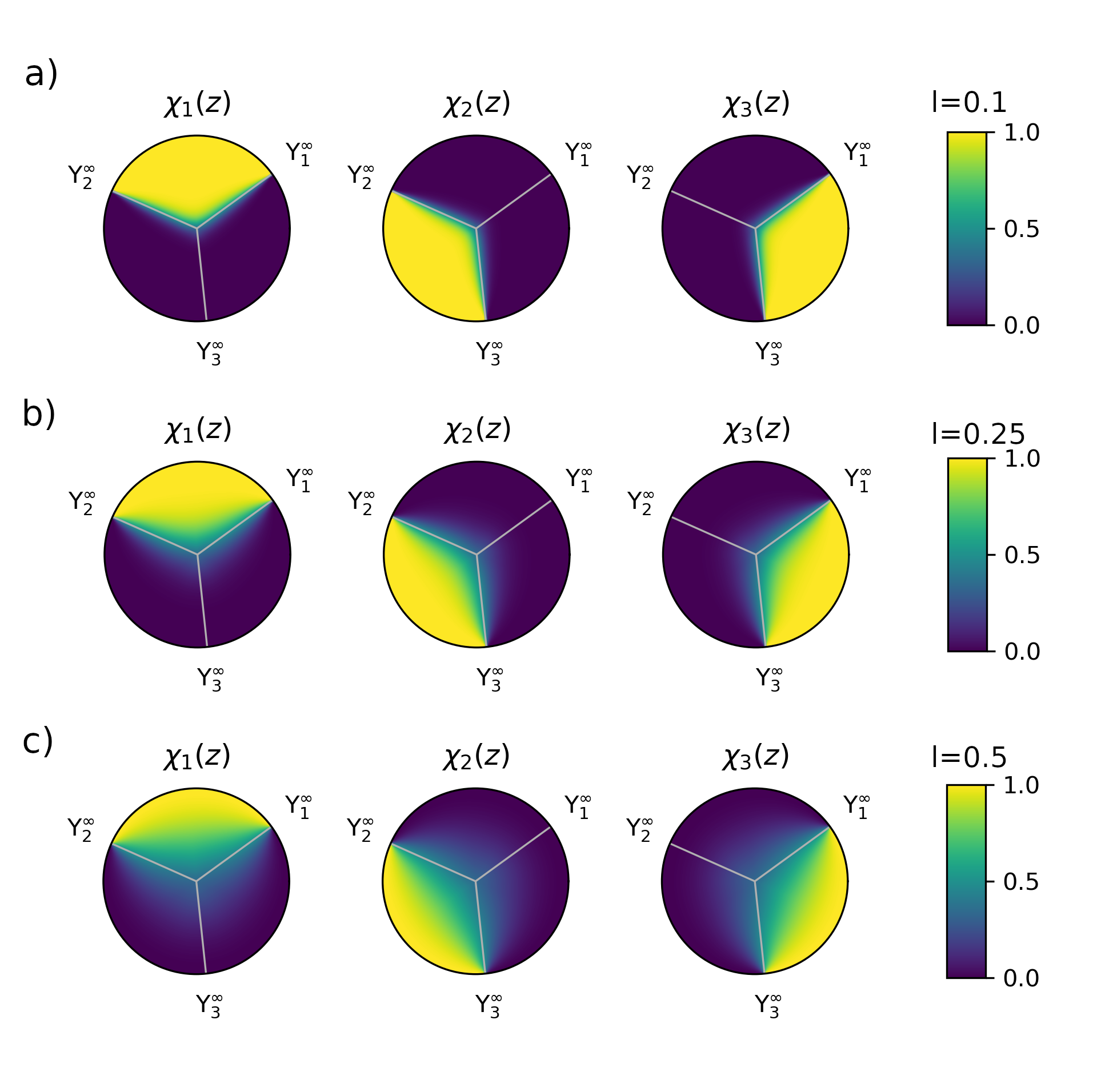}
    \caption{{\bf Y-junction design}. We define three geodesic phase boundaries $Y_i$ which meet at $z=0$ and separate the hyperbolic disk into three phase regions. The indicator functions $\chi_i(z)$ signal the presence of phase $i$ and form a partition of unity $\sum_i \chi_i(z)=1$. The parameter $\ell=0.1,0.25,0.5$ controls the width of the smooth domain wall between the phases. }
    \label{fig:Y_junction_design}
\end{figure}

\subsection{Computational aspects}

The code for all computations presented in the paper can be downloaded from \cite{CodeRepository}.
The spectral densities shown in Fig. 3 and Fig. 4 were created using the kernel polynomial method with a Jackson kernel \cite{Weisse2006}.
500 spectral moments were calculated and the traces were evaluated using 10 random states.
The computations for Fig. 2 rely on the exact diagonalization of the adjacency operator using OpenBLAS.
\clearpage


\begin{thebibliography}{1}

\bibitem{RuzzeneEM2021} M. Ruzzene, E. Prodan, C. Prodan, {\sl Dynamics of elastic hyperbolic lattices}, Extreme Mechanics Letters {\bf 49}, 101491 (2021).

\bibitem{KollarNature2019} A. J. Koll\'ar, M. Fitzpatrick, A. A. Houck, {\sl Hyperbolic lattices in circuit quantum electrodynamics}, Nature {\bf 571}, 45-50 (2019).

\bibitem{LenggenhagerNatComm2022} P. M. Lenggenhager {\it et al}, {\sl Simulating hyperbolic space on a circuit board}, Nature Comm. {\bf 13}, 4373 (2022).

\bibitem{ZhangNatComm2022} W. Zhang, H. Yuan, N. Sun, H. Sun, X. Zhang, {\sl Observation of novel topological states in hyperbolic lattices}, Nature Comm. {\bf 13}, 2937 (2022).

\bibitem{ChenNatComm2023} A. Chen, H. Brand, T. Helbig, T. Hofmann, S. Imhof, A. Fritzsche, T. Kiebling, A. Stegmaier, L. K. Upreti, T. Neupert, T. Bzdušek, M. Greiter, R. Thomale, I. Boettcher, {\sl Hyperbolic matter in electrical circuits with tunable complex phases}, Nature Communications {\bf 14}, 622 (2023).

\bibitem{ZhangNatComm2023} W. Zhang, F. Di, X. Zheng, H. Sun, X. Zhang, {\sl Hyperbolic band topology with non-trivial
second Chern numbers}, Nature Communications {\bf 14}, 1083 (2023).

\bibitem{UrwylerPRL2022} D. M. Urwyler, P. M. Lenggenhager, I. Boettcher, R. Thomale, T. Neupert, T. Bzdušek, {\sl Hyperbolic topological band insulators}, Phys. Rev. Lett. {\bf 129}, 246402 (2022).

\bibitem{LiuPRB2023} Z.-R. Liu, C.-B. Hua, T. Peng, R. Chen, B. Zhou, {\sl Higher-order topological insulators in hyperbolic lattices}, Phys. Rev. B {\bf 107}, 125302 (2023).


\bibitem{MathaiATMP2019} V. Mathai, G. C. Thiang, {\sl Topological phases on the hyperbolic plane: fractional bulk-boundary correspondence}, Adv. Theor. Math. Phys. {\bf 23}, 803-840 (2019).

\bibitem{KollarCMP2020} A. J. Koll\'ar, M. Fitzpatrick, P. Sarnak, A. A. Houck, {\sl Line-graph lattices: Euclidean and non-Euclidean flat bands, and implementations in circuit quantum electrodynamics}, Comm. Math. Phys. {\bf 376}, 1909-1956 (2020).

\bibitem{ComtetAP1987} A. Comtet, {\sl On the Landau levels on the hyperbolic plane}, Annals of Physics {\bf 173}, 185-209 (1987).

\bibitem{CareyCMP1998} A. L. Carey, K. C. Hannabuss, V. Mathai, P. McCann, {\sl Quantum Hall effect on the hyperbolic plane}, Commun. Math. Phys. 190, 629-673 (1998).

\bibitem{MarcolliCCM1999} M. Marcolli and V. Mathai, {\sl Twisted index theory on good orbifolds, I: Noncommutative Bloch theory}, Communications in Contemporary Mathematics {\bf 1}, 553-587 (1999).

\bibitem{MarcolliCMP2001} M. Marcolli, V. Mathai, {\sl Twisted index theory on good orbifolds, II: Fractional quantum numbers}, Commun. Math. Phys. {\bf 217}, 55-87 (2001).

\bibitem{LuekKTh2000} W. L\"uck, R. Stamm, {\sl Computations of K- and L-theory of cocompact planar groups}, K-Theory {\bf 21}, 249-292 (2000).

\bibitem{MassattMMS2017} D. Massatt, M. Luskin, C. Ortner, {Electronic density of states for incommensurate layers}, Multiscale Modeling \& Simulation {\bf 15}, 476-499 (2017). 

\bibitem{ProdanSpringer2017} E. Prodan, {\sl A computational non-commutative geometry program for disordered topological insulators}, (Springer, Berlin, 2017).

\bibitem{LuxArxiv2023} F. R. Lux, E. Prodan, {\sl Spectral and combinatorial aspects of Cayley-crystals}, arXiv:2212.10329 (2022).

\bibitem{MaciejkoSciAdv2021} J. Maciejko, S. Rayan, {\sl Hyperbolic band theory}, Sci. Adv. {\bf 7}, eabe9170 (2021).

\bibitem{ChengPRL2022} N. Cheng, F. Serafin, J. McInerney, Z. Rocklin, K. Sun, X. Mao, {\sl Theory and boundary modes of high-dimensional representations of infinite hyperbolic lattices}, Phys. Rev. Lett. {\bf 129}, 088002 (2022).

\bibitem{MaciejkoPNAS2022} J. Maciejko, S. Rayan, {\sl Automorphic Bloch theorems for hyperbolic lattices}, Proc. Nat. Acad. Sci. {\bf 119}, e2116869119 (2022).


\bibitem{LuekGFA1994} W. L\"uck, {\sl Approximating $L^2$-invariants by their finite-dimensional analogues}, Geom. Funct. Anal. {\bf 4}, 455–481 (1994).

\bibitem{LuekBook2002} W. L\"uck, {\sl $L^2$-invariants: Theory and applications to geometry and K-theory}, Ergeb. Math. Grenzgeb. (3), vol. 44, Springer-Verlag, Berlin, 2002.

\bibitem{SaussetJPA2007} F. Sausset, G. Tarjus, {\sl Periodic boundary conditions on the pseudosphere}, J. Phys. A: Math. Theor. {\bf 40}, 12873-12899 (2007).


\bibitem{Footnote1} The band carrying the hyperbolic Chern number will be resolved in a separate study. 

\bibitem{Katok1992} S. Katok, {\sl Fuchsian groups}, (Univ. of Chicago Press, Chicago, 1992).

\bibitem{BoettcherPRB2022} I. Boettcher, A. V. Gorshkov, A. J. Koll\'ar, J. Maciejko, S. Rayan, R. Thomale, {\sl Crystallography of hyperbolic lattices}, Phys. Rev. B {\bf 105}, 125118 (2022).

\bibitem{Supplemental} Supplemental Material.

\bibitem{YunckenMMJ2003} R. Yuncken, {\sl Regular tessellations of the hyperbolic plane by fundamental domains of a Fuchsian group},  Moscow Mathematical Journal {\bf 3}, 4 (2003).

\bibitem{CoxeterBook} H. S. M. Coxeter . W. O. J. Moser, {\sl Generators and relations for discrete groups}, (Springer, Berlin, 1972).

\bibitem{ApigoPRM2018} D. J. Apigo, K. Qian, C. Prodan, E. Prodan, {\sl Topological edge modes by smart patterning}, Phys. Rev. Materials {\bf 2}, 124203 (2018).

\bibitem{HumphreysBook1990} J. E. Humphreys, {\sl Reflection groups and Coxeter groups}, (Cambridge U. Press, Cambridge, 1990).

\bibitem{Mennicke1967} J. Mennicke, {\sl Eine Bemerkung \"uber Fuchssche Gruppen}, Invent. Math. {\bf 2}, 301-305 (1967)

\bibitem{Siran2001a} J. \v{S}ir\`a\v{n}, {\sl The triangle group representations and their applications to graphs and maps}, Discrete Math. {\bf 229}, 341-358 (2001)

\bibitem{Siran2001b} J. \v{S}ir\`a\v{n}, {\sl The triangle group representations and constructions of regular maps}, Proc. London Math. Soc. {\bf 82}, 513-532 (2001)

\bibitem{Lehmer1933} D. Lehmer, {\sl A note on trigonometric algebraic numbers}, Amer. Math. Mon. {\bf 40}, 165-166 (1933)

\bibitem{Watkins1993} W. Watkins, J. Zeitlin, {\sl The minimal polynomial of $\cos(2\pi/n)$}, Amer. Math. Mon. {\bf 100}, 471-474 (1993)

\bibitem{CodeRepository} \url{https://github.com/luxfabian/hyperbolic_crystals}

\bibitem{footnote7} Degrees of freedom can be turned on or off during these continuous deformations.

\bibitem{footnote9} Model Hamiltonians delivering topological bands with $n_0=1$ appeared already in \cite{UrwylerPRL2022}.

\bibitem{Wang2019} G. Wang, M. Vuorinen, X. Xhang, {\sl On cyclic quadrilaterals in euclidean and hyperbolic geometries}, Publ. Math. Debrecen  {\bf 99}, 123–140 (2021).


\bibitem{Weisse2006} A. Wei\ss e,  G. Wellein, A. Alvermann, H. Fehske, {\sl The kernel polynomial method}, Rev. Mod. Phys. {\bf 78}, 275 (2006)





\end{thebibliography}
\end{document}